\documentclass[sigconf,nonacm]{acmart}

\makeatletter
\@ACM@balancefalse
\makeatother

\usepackage{array}
\usepackage{placeins}

\theoremstyle{remark}
\newtheorem{remark}{Remark}
\theoremstyle{definition}

\begin{document}
\raggedbottom
\vfuzz=12pt

\title{Beyond Explicit Generators: Distribution-Free Linear-Decomposition Attacks on Public-Key Encryption}

\author{Ziyan Chen}
\affiliation{%
  \institution{University of Sydney}
  \city{Sydney}
  \state{New South Wales}
  \country{Australia}}

\author{Dingxuan Zhou}
\affiliation{%
  \institution{University of Sydney}
  \city{Sydney}
  \state{New South Wales}
  \country{Australia}}

% \author{Yuqiao Wang}
% % The blank location fields preserve the current Independent Researcher
% % affiliation in the source. Add the actual city and country before camera ready.
% \affiliation[obeypunctuation=true]{%
%   \institution{Independent Researcher}
%   \city{\mbox{}}
%   \country{\mbox{}}}

\renewcommand{\shortauthors}{Chen and Zhou}

\begin{abstract}
Linear-decomposition attacks show that recovering a secret algebraic action is
often unnecessary in breaking public key scheme: once a target public state lies in a known linear span, the
same decomposition coefficients can be transferred through the unknown action
to recover the shared value.  Prior work has also observed that
random public samples may replace an explicitly constructed basis under suitable
sampling conditions.  We study a different regime motivated by public-key
encryption, where the adversary uses only the public sampling-and-evaluation
oracle available to an honest participant, the induced distribution may be
arbitrary, and the objective is to attack independently generated future
ciphertexts without recovering the entire algebraic span.

We formalize this setting as public paired samples with a fixed secret
linear transport and introduce the sampled-orbit dimension as the effective
dimension of the encryption distribution.  We give a distribution-free
one-shot recovery guarantee, derive a high-probability certificate for the
fraction of future ciphertexts covered by a fixed sampled span, and establish
the optimal sampled-span complexity
$m^\star_{\mathrm{span}}(r,\varepsilon,\delta)
=\Theta((r+\log(1/\delta))/\varepsilon)$.
We then translate these results into a generic impossibility theorem showing
that publicly samplable linear key transport with polynomial sampled-orbit
dimension is incompatible with IND--CPA security whenever the transported value
determines the decryption payload.

Finally, we instantiate the framework against the 2024 probabilistic PKE from
twisted--skew group rings.  We show that its underlying Computational
Twisted--Skew Problem admits a sampler-only linear attack using only independently
generated public protocol samples, which in turn yields plaintext recovery and
constant IND--CPA advantage.  Our experiments validate the exact linear-transport
structure, demonstrate end-to-end plaintext recovery, and illustrate that high
future-ciphertext coverage may arise well before recovery of the full algebraic
span.
\end{abstract}

% \ccsdesc[500]{Security and privacy~Cryptanalysis and other attacks}
% \ccsdesc[300]{Security and privacy~Public key encryption}

\keywords{linear decomposition attack, public-key encryption, IND-CPA,
  twisted--skew group rings}

\maketitle

\section{Introduction}
\label{sec:introduction}

Public-key encryption (PKE) is a foundational cryptographic primitive: anyone
holding the public key can generate a ciphertext, while only the holder of the
secret key should be able to recover the encrypted message.  A recurring design
pattern in algebraic public-key cryptography realizes this functionality through
secret algebraic actions on public objects.  In a typical commuting-action
construction, such as the Ko--Lee paradigm~\cite{KoLeeCheonHanKangPark2000BraidPKE}, a public base point
$v$ and a secret action $A$ give a public value without showing $A$
\[
    w = Av.
\]
A fresh sender samples a secret action $B_\rho$ and forms
\[
    X_\rho = B_\rho v.
\]
When the two actions are compatible, for example when
$AB_\rho=B_\rho A$, both sides can obtain the same transported value
\[
    Y_\rho
    = B_\rho w
    = A(B_\rho v)
    = AX_\rho.
\]
Thus, from the attacker's perspective, the cryptographic computation exposes a
simple but consequential structure:
\[
    \boxed{Y_\rho = AX_\rho,}
\]
where $A$ is a fixed secret linear map and $\rho$ is fresh public randomness.
Commutativity is only one mechanism producing this identity; the results of this
paper require only the fixed key transport.

\paragraph{Linear decomposition cryptanalysis.}
A central lesson of algebraic cryptanalysis is that recovering the secret action
$A$ itself may be unnecessary.  Myasnikov and Roman'kov introduced the linear
decomposition attack (LDA), showing that shared cryptographic values can often be
recovered in polynomial time from linear relations using public
elements~\cite{MyasnikovRomankov2014LDA}.  Ben-Zvi, Kalka, and Tsaban subsequently
developed the broader algebraic-span viewpoint, obtaining polynomial-time
cryptanalysis of several noncommutative algebraic protocols~\cite{BenZviKalkaTsaban2018}.  Related work also developed generic
linear-decomposition attacks for two-sided multiplication protocols~\cite{Romankov2017TwoSided}.

The algebraic principle underlying these attacks is powerful. Suppose an adversary observes \((X_i, Y_i)_{i=1}^m\) with 
\[
    Y_i = AX_i,\qquad i=1,\ldots,m.
\]
If a target published by a sender satisfies
\[
    X^\star = \sum_{i=1}^m \alpha_i X_i,
\]
then linearity immediately transfers the same coefficients through the unknown
secret map to solve the shared key:
\[
    Y^\star
    = AX^\star
    = \sum_{i=1}^m \alpha_i Y_i.
\]
Thus, once the target lies in a known public span, the secret shared key can be recovered without solving $A$.  This coefficient-transfer
principle is a classical result.

\paragraph{From random spanning to future-ciphertext coverage.}
The use of random public samples in linear-decomposition cryptanalysis is also not new.  In particular, Roman'kov observed that randomly sampled vectors can replace an explicitly constructed basis under suitable sampling conditions, and
used this idea in the cryptanalysis of the Modified Matrix Modular
Cryptosystem~\cite{Romankov2018RandomSpanning}. 

It leaves open, however, a different question.  Honest encryption does not generally provide a uniform sampler over
an entire algebraic span.  Instead, it induces some possibly highly non-uniform
distribution $\mu$ over public states $X_\rho$.  Moreover, an attacker need not
recover the entire span in order to break a future encryption, but only the span learned from public samples to contain an independently
generated future target
\[
    X^\star\sim\mu.
\]
The relevant quantity is therefore not whether the samples generate the whole
space, but rather the \emph{future-sample failure probability}
\[
    R_\mu(W_m)
    :=
    \Pr_{X^\star\sim\mu}
    \left[
        X^\star\notin W_m
    \right],
    \qquad
    W_m:=\operatorname{span}(X_1,\ldots,X_m).
\]

The distinction can be substantial.  Consider, for example, a distribution on two
independent vectors satisfying
\[
    \Pr[X=e_1]=1-2^{-\lambda},
    \qquad
    \Pr[X=e_2]=2^{-\lambda}.
\]
Generating the entire support span
$\operatorname{span}\{e_1,e_2\}$ requires observing the exponentially rare
direction $e_2$.  In contrast, the one-dimensional span
$\operatorname{span}\{e_1\}$ already contains an independently generated future
sample with probability $1-2^{-\lambda}$.  Hence full-span recovery can be far
stronger than what cryptanalysis actually requires:
\begin{center}
\fbox{\parbox{0.96\columnwidth}{\centering
The attacker need not learn the entire algebraic span; it only needs to learn
where future ciphertexts are likely to lie.}}
\end{center}

\paragraph{A statistical view of linear cryptanalysis.}
This distinction connects algebraic cryptanalysis to distribution-free learning.
Stable sample-compression theory provides high-probability guarantees for the
population error of hypotheses reconstructed from small stable compression
sets~\cite{HannekeKontorovich2021}.  More recently, related learning-theoretic work has
explicitly considered algorithms that use sampled vectors to obtain linear
subspaces covering most of an underlying arbitrary
distribution~\cite{noivirt2026replicable}.  These results, however, are not
cryptanalytic statements: they do not consider observations
$(X,AX)$ generated by a hidden secret action, coefficient transfer to future
cryptographic targets, or formal consequences for public-key security.

This paper studies precisely this intersection:
\begin{center}
\fbox{\parbox{0.88\columnwidth}{\centering
Linear-decomposition cryptanalysis $\cap$ distribution-free learning
$\cap$ formal PKE security.}}
\end{center}
We ask whether the public sampling-and-evaluation capability already required for
honest encryption can, by itself, reveal the effect of a fixed secret linear map
on future ciphertexts.

\paragraph{Our results.}
We formalize this question through a public paired sampler induced by public states and secret actions
\[
    \operatorname{Pair}(\rho)
    =
    (X_\rho,Y_\rho),
    \qquad
    Y_\rho=AX_\rho,
\]
where $\rho$ follows an arbitrary public distribution and the adversary is given
neither $A$ nor an explicit generating set or basis for the reachable subspace.
For the span of target distribution \(\mu\),
\[
    W_\mu
    :=
    \operatorname{span}(\operatorname{supp}\mu),
    \qquad
    r_\mu:=\dim W_\mu,
\]
we call $r_\mu$ the \emph{sampled orbit dimension}.

Our first result gives a distribution-free one-shot guarantee.  For independent
$X_1,\ldots,X_m,X^\star\sim\mu$,
\[
    \Pr\!\left[
        X^\star\in
        \operatorname{span}(X_1,\ldots,X_m)
    \right]
    \ge
    1-\frac{r_\mu}{m+1}.
\]
Consequently, the corresponding transported value $Y^\star=AX^\star$ is recovered
with the same probability by classical coefficient transfer.  In particular,
$2r_\mu-1$ public samples already give success probability at least $1/2$, without
any uniformity or anti-concentration assumption on $\mu$.

We then strengthen this one-shot statement to an offline-to-future guarantee.
Viewing the rank-increasing samples as a stable compression set yields, with
probability at least $1-\delta$ over the preprocessing samples,
\[
    R_\mu(W_m)
    =
    O\!\left(
        \frac{r_\mu+\log(1/\delta)}{m}
    \right).
\]
Thus
\[
    m
    =
    O\!\left(
        \frac{r_\mu+\log(1/\delta)}{\varepsilon}
    \right)
\]
public samples suffice for a fixed learned span to recover the transported value
for at least a $1-\varepsilon$ fraction of future encryption randomness, with
confidence $1-\delta$.  We complement this upper bound with matching lower bounds
for the distribution-free sampled-span problem and obtain the minimax
characterization
\[
    m^\star_{\mathrm{span}}(r,\varepsilon,\delta)
    =
    \Theta\!\left(
        \frac{r+\log(1/\delta)}{\varepsilon}
    \right).
\]
The optimality statement concerns sampled-span coverage; we do not claim a lower
bound for all possible cryptanalysis.

Finally, we translate these statistical guarantees into a formal PKE
impossibility result.  For schemes in which a publicly samplable exact linear
transport
\[
    Y_\rho=A_{\mathsf{sk}}X_\rho
\]
determines the decryption-critical payload information, polynomial sampled-orbit
dimension implies a polynomial-time IND--CPA adversary with constant advantage.
Hence hiding an explicit algebraic generating set is insufficient: public sampling
alone can already reveal the secret transport on enough of the ciphertext
distribution to violate semantic security.

\paragraph{A concrete break of a twisted--skew PKE.}
We instantiate the framework against the probabilistic PKE proposed by
de la Cruz, Mart\'inez-Moro, Mu\~noz-Ruiz, and
Villanueva-Polanco~\cite{CruzMartinezMunozVillanueva2024}.
Their construction is based on twisted--skew group rings and uses a generally
non-associative multiplication.  Nevertheless, keeping the multiplication order
fixed, we show that the decryption-critical map is
$\mathbb F_q$-linear and that the public protocol itself provides exactly the
paired samples required by our attack.  This yields a direct sampler-only attack
on the Computational Twisted--Skew Problem (CTSP) underlying the construction and,
consequently, on the PKE.

For the proposed setting with $|G|=2n$, the ambient
$\mathbb F_q$-dimension is $4n$.  Our distribution-free bound therefore implies
that $8n-1$ independently generated public samples suffice to recover a fresh CTSP
target with probability at least $1/2$, yielding constant IND--CPA advantage for
the PKE.  To the best of our knowledge, we are unaware of a prior work that gives
an explicit cryptanalysis of the CTSP or Algorithms~6--8 of this exact 2024
construction.  We nevertheless state this claim cautiously: closely related
linear-system attacks on two-sided multiplication protocols have been studied,
including the recent work of Otero
Sanchez~\cite{OteroSanchez2025TwoSide}, and classical basis-based
linear-decomposition ideas may also be adaptable to the construction.  Our claim
is therefore not that linear decomposition itself is new, nor that sampler access
is necessary to break the scheme, but that \emph{honest public sampling alone}
suffices under an arbitrary induced distribution and admits an explicit
finite-sample security analysis.

\paragraph{Contributions.}
Our contributions can be summarized as follows.
\begin{itemize}
    \item \textbf{Sampler-only linear cryptanalysis.}
    We formulate public linear key transport through paired samples
    $(X_\rho,Y_\rho)$ with $Y_\rho=AX_\rho$ and characterize the
    distribution-free future-ciphertext coverage of the sampled span.  The resulting
    sample complexity is
    \[
        \Theta\!\left(
        \frac{r+\log(1/\delta)}{\varepsilon}
        \right)
    \]
    for the sampled-span problem.

    \item \textbf{Formal PKE consequence.}
    We show that publicly samplable exact linear key transport of polynomial sampled
    orbit dimension is incompatible with IND--CPA security whenever the transported
    value suffices to recover the encryption payload.

    \item \textbf{Concrete cryptanalysis.}
    We instantiate the framework against the 2024 twisted--skew group-ring PKE,
    directly attack its CTSP assumption using only publicly generated samples, and
    derive explicit plaintext-recovery and IND--CPA consequences.
\end{itemize}
\section{Preliminaries and Attack Setup}
\label{sec:prelim}

Figure~\ref{fig:linear-transport-attack} summarizes the complete setup used in
the paper.  Its left-hand side shows how an honest sender and receiver obtain
the same transported key through the public interface; its right-hand side
shows how an adversary replaces the secret computation by public sampling,
linear decomposition, and coefficient transfer.  We formalize these two flows
in that order.

\begin{figure*}[t]
    \centering
    \includegraphics[width=0.90\textwidth]{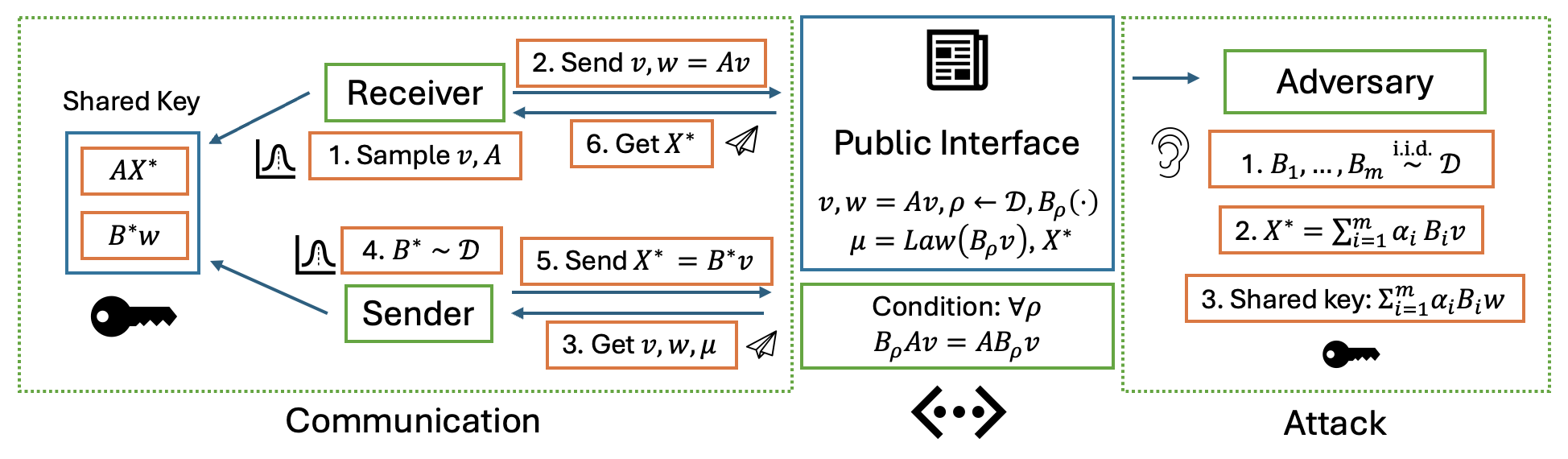}
    \caption{Honest linear key transport (left and center) and the
    sampler-only linear-decomposition attack (right).  The adversary learns the
    transported value on a covered target without recovering the secret map. Note that in the image, we use $B_\rho \sim \mathcal{D}$ as a short-hand of sampling-and-evaluation process.}
    \Description{The receiver publishes vectors v and w equal to A v. An
    honest sender samples and evaluates a public linear map B rho, sends B rho
    v, and obtains the shared key B rho w, while the receiver obtains the same
    key by applying A. The adversary samples public pairs B i v and B i w,
    decomposes a target as a linear combination of the first components, and
    transfers the coefficients to the second components.}
    \label{fig:linear-transport-attack}
\end{figure*}

\subsection{Honest public linear key transport}
\label{subsec:setup}

Let $\mathbb{F}$ be a field and let $V$ be an $N$-dimensional vector space
over $\mathbb{F}$.  All vector-space operations and linear-system computations
are assumed efficient in the natural representation of $V$.  In the
cryptographic applications below, $\mathbb{F}$ is finite.

The motivating honest protocol uses public vectors $v,w\in V$, a secret
linear map $A\in\operatorname{End}_{\mathbb{F}}(V)$, and a publicly samplable
family of linear maps, denoted as sampling-and-evaluation process,
\[
    B_\rho\in\operatorname{End}_{\mathbb{F}}(V),
    \qquad
    \rho\leftarrow\mathcal{D},
\]
where $\mathcal{D}$ is an arbitrary distribution over public sender
randomness.  The protocol shown on the left of
Figure~\ref{fig:linear-transport-attack} proceeds as follows.

\begin{enumerate}
    \item \textbf{Receiver setup.}  The receiver chooses a public base point
    $v\in V$ and a secret linear map $A$.

    \item \textbf{Public key.}  The receiver computes $w:=Av$ and publishes
    $(v,w)$ together with the public sampling-and-evaluation interface for
    $B_\rho$.

    \item \textbf{Sender initialization.}  The sender obtains
    $(v,w,\mathcal{D})$ from the public interface.

    \item \textbf{Fresh sampling.}  The sender samples
    $\rho\leftarrow\mathcal{D}$ and computes
    $X_\rho:=B_\rho v$ and sender's local shared key $Y_\rho:=B_\rho w$.

    \item \textbf{Transmission.}  The sender transmits $X_\rho$ to the
    receiver.

    \item \textbf{Receiver key derivation.}  The receiver computes reciever's shared key
    $A X_\rho$ using the secret map.
\end{enumerate}

Correctness requires the \textbf{key-transport relation} 
\begin{equation}
    B_\rho w=A(B_\rho v)
    \label{eq:transport-compatibility}
\end{equation}
for every valid $\rho$.  Consequently, the two honest computations agree:
\begin{equation*}
    \underbrace{B_\rho w}_{\text{sender key}}
    =
    \underbrace{A X_\rho}_{\text{receiver key}}.
\end{equation*}
Commutation $AB_\rho=B_\rho A$ is sufficient for
\eqref{eq:transport-compatibility}, because $w=Av$, but only the displayed
transport relation is required.

The analysis also covers protocols in which the two public values are not
syntactically obtained by applying the same map $B_\rho$ to two fixed vectors.
We therefore retain the following more general interface.  An honest sender,
and hence an adversary, can sample $\rho\leftarrow\mathcal{D}$ and run a public
procedure
\[
    \mathsf{Pair}(\rho)=(X_\rho,Y_\rho)\in V\times V.
\]
There exists a fixed secret linear map
$A\in\operatorname{End}_{\mathbb{F}}(V)$ such that
\begin{equation}
    Y_\rho=A X_\rho
    \label{eq:paired-linear-samples}
\end{equation}
for every valid $\rho$.  The protocol above realizes this abstraction
by setting
$\mathsf{Pair}(\rho)=(B_\rho v,B_\rho w)$.  The procedure need not reveal
$A$, an algebraic generating set, or a basis for the reached subspace.

Let $\mu$ denote the distribution of $X_\rho$ for
$\rho\leftarrow\mathcal{D}$.  Its \emph{sampled orbit subspace} and
\emph{sampled orbit dimension} are
\begin{equation*}
    W_\mu
    :=
    \operatorname{span}_{\mathbb{F}}
    \bigl(\operatorname{supp}(\mu)\bigr),
    \qquad
    r_\mu
    :=
    \dim_{\mathbb{F}}W_\mu
    \leq N.
\end{equation*}

\subsection{Sampler-only linear-decomposition attack}
\label{subsec:coefficient-transfer}

The right-hand side of Figure~\ref{fig:linear-transport-attack} turns the
honest public interface into an attack.  The adversary performs the following
three steps.

\begin{enumerate}
    \item \textbf{Public paired sampling.}  Sample
    $\rho_1,\ldots,\rho_m\overset{\mathrm{i.i.d.}}{\sim}\mathcal{D}$, evaluate as \(\{B_{\rho_{i}}\}_{i=1}^{m}\) and
    compute
    \[
        (X_i,Y_i):=\mathsf{Pair}(\rho_i),
        \qquad i\in[m].
    \]

    \item \textbf{Linear decomposition.}  On receiving a fresh target
    $X^\star$, test whether it lies in
    \[
        W_m:=\operatorname{span}_{\mathbb{F}}
        \{X_1,\ldots,X_m\}.
    \]
    If so, Gaussian elimination returns
    $\alpha_1,\ldots,\alpha_m\in\mathbb{F}$ such that
    \[
        X^\star=\sum_{i=1}^m\alpha_iX_i.
    \]

    \item \textbf{Key recovery.}  Transfer the same coefficients to the
    second components and output
    \[
        \widehat{Y}^\star
        :=
        \sum_{i=1}^m\alpha_iY_i.
    \]
\end{enumerate}

The algebraic justification is the classical linear-decomposition principle:
linear relations among public states transfer through the unknown fixed linear
map.  This mechanism underlies the linear decomposition attack of Myasnikov
and Roman'kov~\cite{MyasnikovRomankov2014LDA} and algebraic-span
cryptanalysis~\cite{BenZviKalkaTsaban2018}.

\begin{lemma}[Coefficient transfer]
\label{lem:coefficient-transfer}
Let $(X_i,Y_i)_{i=1}^m$ satisfy $Y_i=AX_i$ for a fixed linear map
$A\in\operatorname{End}_{\mathbb{F}}(V)$.  Suppose a target pair
$(X^\star,Y^\star)$ satisfies
\[
    Y^\star=AX^\star,
    \qquad
    X^\star=\sum_{i=1}^m\alpha_iX_i
\]
for coefficients $\alpha_1,\ldots,\alpha_m\in\mathbb{F}$.  Then the
adversary recovers the target value without recovering $A$:
\begin{equation}
    Y^\star
    =
    \sum_{i=1}^m\alpha_iY_i.
    \label{eq:coefficient-transfer}
\end{equation}
\end{lemma}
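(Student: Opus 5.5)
The plan is a one-line computation using only the $\mathbb{F}$-linearity of $A$. I will start from the hypothesis $Y^\star = AX^\star$. Then I will substitute the assumed decomposition $X^\star=\sum_{i=1}^m\alpha_iX_i$. Next I will apply linearity of $A\in\operatorname{End}_{\mathbb{F}}(V)$, namely additivity and homogeneity over finite sums. Finally I will replace each $AX_i$ by $Y_i$ using the hypothesis on the sample pairs:
\[
    Y^\star
    = A\Bigl(\sum_{i=1}^m\alpha_iX_i\Bigr)
    = \sum_{i=1}^m\alpha_i\,AX_i
    = \sum_{i=1}^m\alpha_iY_i ,
\]
which is exactly \eqref{eq:coefficient-transfer}.

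The second step is to justify the phrase ``without recovering $A$.'' The right-hand side of \eqref{eq:coefficient-transfer} involves only the public second components $Y_i$ and the coefficients $\alpha_i$. The coefficients are obtained by Gaussian elimination on the public first components $X_1,\ldots,X_m$ and $X^\star$, which is efficient by the standing assumption on the representation of $V$.

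I would also remark that the output does not depend on which decomposition is found. If $\sum_i\alpha_iX_i=\sum_i\beta_iX_i$, then $\sum_i(\alpha_i-\beta_i)X_i=0$. Applying $A$ gives $\sum_i(\alpha_i-\beta_i)Y_i=0$. So any solution returned by elimination yields the same $\widehat Y^\star$, even when the $X_i$ are linearly dependent, which is typical when $m>r_\mu$.

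There is no real obstacle here. The only point requiring care is that one fixed map $A$ must serve for both the samples and the target, which the statement assumes via \eqref{eq:paired-linear-samples}. This consistency, not any structure of the distribution $\mu$, is what makes the transfer valid.
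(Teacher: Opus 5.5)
Your proof is correct and is essentially the paper's own one-line linearity computation, $Y^\star=A\bigl(\sum_i\alpha_iX_i\bigr)=\sum_i\alpha_iAX_i=\sum_i\alpha_iY_i$. Your extra observation that the output does not depend on which decomposition Gaussian elimination returns is correct and is not stated in the paper, which instead remarks on the commuting-action specialization $Y_i=B_{\rho_i}Av=AB_{\rho_i}v=AX_i$.
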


\noindent The proof is given in Appendix~\ref{app:proof-coefficient-transfer}.

Let the dimension of public sample span
\begin{equation*}
    \widehat r_m:=\dim W_m
\end{equation*}
and define the fresh-sample failure probability by
\begin{equation*}
    R_\mu(W):=\Pr_{X\sim\mu}[X\notin W].
\end{equation*}
for any subspace $W\subseteq V$. Thus $R_\mu(W_m)$ is the probability that a fresh public sample falls outside the span.

\subsection{A one-shot distribution-free attack}
\label{subsec:one-shot}

Classical linear-decomposition cryptanalysis typically constructs an exact
orbit span from publicly known generators
\cite{MyasnikovRomankov2014LDA,BenZviKalkaTsaban2018}.  Here we instead
require no assumption on the distribution $\mu$.

\begin{proposition}[One-shot attack]
\label{prop:one-shot}
Let $X_1,\ldots,X_m,X^\star\overset{\mathrm{i.i.d.}}{\sim}\mu$, with
sampled orbit dimension $r_\mu$.  Then
\begin{equation}
    \Pr\left[
        X^\star
        \in
        \operatorname{span}(X_1,\ldots,X_m)
    \right]
    \geq
    1-\frac{r_\mu}{m+1}.
    \label{eq:one-shot-bound}
\end{equation}
Consequently, using the paired samples $(X_i,Y_i)$ and
Lemma~\ref{lem:coefficient-transfer}, the target $Y^\star$ can be recovered
with probability at least
\[
    1-\frac{r_\mu}{m+1}.
\]
\end{proposition}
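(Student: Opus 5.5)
The plan is a classical exchangeability (leave-one-out) argument. Set $Z_1,\dots,Z_{m+1}$ to be the $m+1$ i.i.d.\ samples, with $Z_{m+1}=X^\star$. Since the sequence is exchangeable, the probability that $X^\star$ lies outside the span of the others equals, for every $j$, the probability that $Z_j\notin\operatorname{span}\{Z_i:i\neq j\}$. Hence
\[
\Pr\bigl[X^\star\notin\operatorname{span}(X_1,\dots,X_m)\bigr]
=\frac{1}{m+1}\,\mathbb{E}\Bigl[\#\{j: Z_j\notin\operatorname{span}\{Z_i:i\neq j\}\}\Bigr].
\]

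The key deterministic step bounds this count. Call $Z_j$ \emph{essential} if $Z_j\notin\operatorname{span}\{Z_i:i\neq j\}$. I will show that the essential vectors are linearly independent. Suppose $\sum_{j\in E}c_jZ_j=0$ over the essential set $E$, with some $c_k\neq 0$. Then $Z_k$ is a linear combination of the other $Z_j$, which contradicts essentiality of $Z_k$. So the number of essential indices is at most $\dim\operatorname{span}(Z_1,\dots,Z_{m+1})$.

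Almost surely each $Z_i$ lies in $\operatorname{supp}(\mu)\subseteq W_\mu$, so this dimension is at most $r_\mu$. If the field or space is infinite and $\mu$ is a general measure, note instead that $\Pr[Z_i\in W_\mu]=1$, which suffices. Taking expectations gives failure probability at most $r_\mu/(m+1)$, which is \eqref{eq:one-shot-bound}.

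For the consequence, on the event $X^\star\in W_m$, Gaussian elimination produces coefficients $\alpha_i$. Since $Y^\star=AX^\star$ and $Y_i=AX_i$ by \eqref{eq:paired-linear-samples}, Lemma~\ref{lem:coefficient-transfer} gives $\widehat Y^\star=Y^\star$ regardless of which coefficient vector is returned. So the success probability is at least the same bound.

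The only delicate point is the symmetrization. I must justify that the events $\{Z_j\text{ essential}\}$ all have equal probability, which follows from the i.i.d.\ assumption and permutation invariance of the span. I also need the measurability of ``$X\in W$'' when the support is defined measure-theoretically. In the finite-field setting of the paper both are immediate, so I expect no real obstacle.
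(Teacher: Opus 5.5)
Your proof is correct, but it takes a different route from the paper's.

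\textbf{The paper's route.} The paper uses a sequential, online-to-batch argument. It sets $I_j=\mathbf{1}\{X_j\notin\operatorname{span}(X_1,\ldots,X_{j-1})\}$ with $X_{m+1}=X^\star$, and notes that $\sum_j I_j=\dim\operatorname{span}(X_1,\ldots,X_{m+1})\le r_\mu$. It then uses independence to write $q_j=\mathbb{E}[1-\mu(S_{j-1})]$. This is non-increasing in $j$ because the running spans are nested, so the last term $q_{m+1}$ is at most the average, $r_\mu/(m+1)$.

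\textbf{Your route.} You instead symmetrize over all $m+1$ positions and count the leave-one-out ``essential'' points. Your proof that these points are linearly independent is correct. Repeated or zero vectors are automatically non-essential, so no multiset issue arises.

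\textbf{What each buys.} Your argument gives the exact identity $\Pr[X^\star\notin W_m]=\mathbb{E}[\#E]/(m+1)$. It uses only permutation invariance of the joint law of the $m+1$ samples. This identity can be strictly sharper than the paper's intermediate bound $\mathbb{E}[\dim S_{m+1}]/(m+1)$. For example, for a point mass at a nonzero vector, your identity gives failure probability $0$, while the paper's averaging gives only $1/(m+1)$. The paper's route, in turn, shows that the failure probability is non-increasing in the number of samples. Its indicators $I_j$ are exactly the retention rule of the greedy compression set $\kappa(S_m)$ used in Theorem~\ref{thm:distribution-free}, so the one-shot and high-probability results share one mechanism.

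Both routes reach the same final bound $r_\mu/(m+1)$. Your remark that any coefficient vector returned by Gaussian elimination transfers to the same $Y^\star$ correctly handles the consequence.
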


\noindent The proof is given in Appendix~\ref{app:proof-one-shot}.

\begin{remark}
Taking $m=2r_\mu-1$ in Proposition~\ref{prop:one-shot} gives recovery
probability at least $1/2$.  More generally, $m=O(r_\mu/\varepsilon)$ gives
one-shot failure probability at most $\varepsilon$.  We next
strengthen this pointwise statement to a high-probability certificate on the
failure rate over all future samples and establishes the optimal sample
complexity.
\end{remark}
\section{Main Results}
\label{sec:main}

We now move from one-shot recovery to a distribution-free finite-sample
analysis.  The central observation is that sequentially retaining only rank-increasing samples defines a stable sample compression scheme whose compression size is exactly the empirical sampled-orbit dimension. This connects sampler-only linear cryptanalysis to stable compression theory.

\subsection{A distribution-free future-coverage certificate}
\label{subsec:pac-certificate}

Consider the following greedy compression rule.  Process $X_1,\ldots,X_m$ in order and retain $X_i$ if and only if
\[
    X_i
    \notin
    \operatorname{span}(X_1,\ldots,X_{i-1}).
\]
Denote the retained set as
\[
    \kappa(S_m)
    \subseteq
    S_m:=\{X_1,\ldots,X_m\}.
\]
By construction, \(\kappa(S_m)\) keeps the dimension and span, namely
\begin{equation}
    |\kappa(S_m)|
    =
    \widehat r_m
    =
    \dim W_m,
    \qquad
    \operatorname{span}(\kappa(S_m))=W_m.
    \label{eq:compression-rank}
\end{equation}

\begin{theorem}[Distribution-free sampler-only certificate]
\label{thm:distribution-free}
Let $X_1,\ldots,X_m\overset{\mathrm{i.i.d.}}{\sim}\mu$. Then, for every $\delta\in(0,1)$, with probability at least $1-\delta$ over
the $m$ training samples, the fresh-sample failure probability satisfies
\begin{equation}
    R_\mu(W_m)
    \leq
    \frac{4}{m}
    \left(
        6\widehat r_m
        +
        \ln\frac{e}{\delta}
    \right).
    \label{eq:data-dependent-risk}
\end{equation}
Consequently, since $\widehat r_m\leq r_\mu$, it is further bounded by
\begin{equation}
    R_\mu(W_m)
    \leq
    \frac{4}{m}
    \left(
        6r_\mu
        +
        \ln\frac{e}{\delta}
    \right).
    \label{eq:population-rank-risk}
\end{equation}

Therefore, after observing $m$ public paired samples
$(X_i,Y_i)_{i=1}^m$, an adversary can recover the transported value $Y^\star$
for at least a $1-R_\mu(W_m)$ fraction of fresh samples with probability at
least $1-\delta$.
\end{theorem}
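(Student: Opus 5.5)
We will derive Theorem~\ref{thm:distribution-free} from the stable sample-compression bound of Hanneke and Kontorovich~\cite{HannekeKontorovich2021}. Their bound states the following. Let $\kappa$ be a stable compression scheme: removing a point of $S_m$ outside $\kappa(S_m)$ leaves $\kappa$ unchanged, so $\kappa$ applied to any intermediate set $\kappa(S_m)\subseteq S'\subseteq S_m$ returns $\kappa(S_m)$. Let the reconstructed hypothesis be consistent with all of $S_m$. Then, with probability at least $1-\delta$, its population error is at most $\frac{4}{m}\bigl(6|\kappa(S_m)|+\ln\frac{e}{\delta}\bigr)$. This is exactly the form of \eqref{eq:data-dependent-risk}. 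We therefore cast sampled-span coverage as a realizable one-sided learning problem.

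\textbf{Learning problem.} Instances are drawn from $\mu$, and the target concept labels every point positively. The hypothesis reconstructed from a compression set $C$ is the indicator of $\operatorname{span}(C)$. A point is misclassified exactly when it lies outside that span, so the population error of the hypothesis is $R_\mu(\operatorname{span} C)$. By \eqref{eq:compression-rank}, the reconstruction from $\kappa(S_m)$ is the indicator of $W_m$, whose error is $R_\mu(W_m)$. It is also consistent on the sample, since every $X_i\in W_m$.

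\textbf{Stability.} This is the main step to check. We must show that for every $S'$ with $\kappa(S_m)\subseteq S'\subseteq S_m$, processed in the inherited order, the greedy rule returns exactly $\kappa(S_m)$. The argument is an induction along the order. Suppose $X_i\in S'$ is retained in $S_m$. Then $X_i\notin\operatorname{span}(X_1,\ldots,X_{i-1})$, so it lies outside the smaller span of the earlier $S'$-elements and is retained again. Suppose instead $X_i\in S'$ was discarded in $S_m$. Then $X_i\in\operatorname{span}(X_1,\ldots,X_{i-1})$. That span equals the span of the retained earlier elements, all of which belong to $S'$ and, by induction, are retained in the run on $S'$. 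Hence $X_i$ is discarded again.

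Two technicalities need care. First, repeated samples (atoms of $\mu$) are handled by treating $S_m$ as an indexed sequence. Second, we must confirm that the cited theorem allows the compression size to be data-dependent. If the citation only covers a fixed size bound $k$, we would apply it with $k=\widehat r_m$ via a union over $k\le N$, or use the version stated for $|\kappa(S)|$ directly, which I believe Hanneke--Kontorovich provide. If only a fixed-$k$ form is available, the fallback is to take $k=r_\mu$, which gives \eqref{eq:population-rank-risk} directly.

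\textbf{Conclusion.} Since $W_m\subseteq W_\mu$ almost surely, $\widehat r_m\le r_\mu$, and \eqref{eq:population-rank-risk} follows from \eqref{eq:data-dependent-risk} by monotonicity. For the final sentence of the theorem: on the event of probability at least $1-\delta$, every fresh pair $(X^\star,Y^\star)$ with $X^\star\in W_m$ has $Y^\star$ recovered by Gaussian elimination followed by Lemma~\ref{lem:coefficient-transfer}. This event has $\mu$-mass $1-R_\mu(W_m)$.
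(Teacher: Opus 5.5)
Your proposal is correct and follows essentially the same route as the paper: you cast sampled-span coverage as a realizable problem whose reconstructed hypothesis is determined by $\operatorname{span}(\kappa(S_m))=W_m$, verify that the greedy rank-increasing compression is sample-consistent and stable, and apply the data-dependent Hanneke--Kontorovich bound (their Corollary~11) with $|\kappa(S_m)|=\widehat r_m$. The only difference is in the stability check: you prove by induction the stronger property $\kappa(S')=\kappa(S_m)$, and you handle repeated atoms by indexing, whereas the paper only shows that the reconstructed classifiers coincide because $\operatorname{span}(S')=\operatorname{span}(S_m)$; your property implies the paper's, so either suffices.
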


\noindent The proof is given in Appendix~\ref{app:proof-distribution-free}.

\subsection{Optimal sample complexity}
\label{subsec:optimal-sample}

The preceding theorem gives an $O(r_\mu/\varepsilon)$ upper bound. We next show a matching lower bound, which indicates that this dependence is unavoidable for arbitrary sampling distributions.

For integers $r\geq1$ and parameters $\varepsilon,\delta\in(0,1)$, define the mini-max distribution-free sampled-span complexity
\begin{equation*}
\begin{aligned}
    m^\star_{\mathrm{span}}(r,\varepsilon,\delta)
    := \\
    \min\biggl\{m:\,
        &\sup_{\substack{\mu:
        \dim\operatorname{span}(\operatorname{supp}\mu)\leq r}}
        \
        \Pr_{S_m\sim\mu^m}
        &
        \left[R_\mu(W_m)>\varepsilon\right]
        \leq\delta
    \biggr\}.
\end{aligned}
\end{equation*}

\begin{theorem}[Optimal distribution-free sample complexity]
\label{thm:optimal-sample-complexity}
For $r\geq2$, $\varepsilon\in(0,1/4)$, and $\delta\in(0,1/2)$,
\begin{equation*}
    m^\star_{\mathrm{span}}(r,\varepsilon,\delta)
    =
    \Theta\left(
        \frac{
            r+\ln(1/\delta)
        }{\varepsilon}
    \right).
\end{equation*}
More explicitly, from the preceding theorem,
\begin{equation}
    m^\star_{\mathrm{span}}(r,\varepsilon,\delta)
    \leq
    \left\lceil
    \frac{4}{\varepsilon}
    \left(
        6r+\ln\frac{e}{\delta}
    \right)
    \right\rceil,
    \label{eq:upper-sample-complexity}
\end{equation}
while every distribution-free guarantee must satisfy
\begin{equation}
    m^\star_{\mathrm{span}}(r,\varepsilon,\delta)
    >
    \frac{r-1}{8\varepsilon}
    \label{eq:lower-rank}
\end{equation}
and
\begin{equation}
    m^\star_{\mathrm{span}}(r,\varepsilon,\delta)
    \geq
    \frac{\ln(1/\delta)}{4\varepsilon}
    \label{eq:lower-confidence}
\end{equation}
up to integer rounding.
\end{theorem}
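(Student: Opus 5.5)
The upper bound \eqref{eq:upper-sample-complexity} comes directly from Theorem~\ref{thm:distribution-free}. For every $\mu$ with $r_\mu\le r$, inequality \eqref{eq:population-rank-risk} gives, with probability at least $1-\delta$,
\[
R_\mu(W_m)\le \frac{4}{m}\left(6r+\ln\frac{e}{\delta}\right).
\]
Choosing $m$ equal to the stated ceiling makes the right-hand side at most $\varepsilon$. Hence $\sup_\mu\Pr[R_\mu(W_m)>\varepsilon]\le\delta$. The two lower bounds then give the matching $\Omega$, since $\max(a,b)\ge (a+b)/2$.

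For the confidence lower bound \eqref{eq:lower-confidence}, I will use a two-point distribution. Fix independent $e_1,e_2$ and let
\[
\mu(e_1)=1-2\varepsilon,\qquad \mu(e_2)=2\varepsilon,
\]
which has dimension $2\le r$. If $e_2$ never appears among the $m$ samples, then $W_m\subseteq\operatorname{span}\{e_1\}$ and $R_\mu(W_m)= 2\varepsilon>\varepsilon$. This event has probability $(1-2\varepsilon)^m\ge e^{-4\varepsilon m}$, using $\ln(1-x)\ge -2x$ for $x\le 1/2$, which holds since $\varepsilon<1/4$. A guarantee with confidence $\delta$ therefore forces $e^{-4\varepsilon m}\le\delta$, that is, $m\ge \ln(1/\delta)/(4\varepsilon)$.

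For the rank lower bound \eqref{eq:lower-rank}, I will put a heavy atom on one direction and spread small mass over the rest. Take independent $e_1,\dots,e_r$ and set
\[
\mu(e_1)=1-4\varepsilon,\qquad \mu(e_j)=\frac{4\varepsilon}{r-1}\quad (j=2,\dots,r).
\]
This is valid since $\varepsilon<1/4$. Because the atoms are linearly independent, $R_\mu(W_m)$ is exactly the total mass of the atoms $e_j$ ($j\ge2$) not yet observed. The number of observed light atoms is at most the number $K$ of light draws, and $K\sim\mathrm{Bin}(m,4\varepsilon)$ with mean $4\varepsilon m$. If $m\le (r-1)/(8\varepsilon)$, then $\mathbb E K\le (r-1)/2$. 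By Markov's inequality, $\Pr[K\ge 3(r-1)/4]\le 2/3$, so with probability at least $1/3$ at least $(r-1)/4$ light atoms are unseen. In that case
\[
R_\mu(W_m)\ge \frac{r-1}{4}\cdot\frac{4\varepsilon}{r-1}=\varepsilon.
\]
Since $1/3<1/2$ is not quite $>\delta$ for all $\delta<1/2$, I will tighten the constants. One option is the Chernoff bound $\Pr[K\ge 2\mathbb E K]$ together with small-$r$ cases. A cleaner option is to use a heavier light mass, e.g.\ $8\varepsilon$ total, which requires adjusting $\varepsilon<1/8$, or to assign mass $1-c\varepsilon$ to $e_1$ with a larger $c$. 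I will also secure a strict inequality $R_\mu>\varepsilon$ by making the unseen count strictly exceed the threshold.

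The main obstacle is this constant bookkeeping in the rank lower bound. I need the failure probability to exceed every $\delta<1/2$, not just $1/3$, while staying within $\varepsilon<1/4$ and $m\le (r-1)/(8\varepsilon)$. My first attempt will bound the probability that $K<(r-1)/2$ from below by $1/2$ or more, using that the median of a binomial is within $1$ of its mean; I will then set the light mass so that $(r-1)/2$ unseen atoms carry strictly more than $\varepsilon$. Small $r$ such as $r=2,3$ I will treat directly via the two-point argument. The statement's phrase ``up to integer rounding'' gives slack here.
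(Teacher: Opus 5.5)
Your upper bound and your confidence lower bound \eqref{eq:lower-confidence} are the paper's own arguments: the same appeal to Theorem~\ref{thm:distribution-free} and the same two-point distribution with rare mass $2\varepsilon$. The rank lower bound \eqref{eq:lower-rank}, however, is not established. The only computation you complete gives $\Pr[R_\mu(W_m)>\varepsilon]\ge 1/3$, which does not exceed $\delta$ when $\delta\in[1/3,1/2)$, and the repairs you list are not carried out.

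The repair you call cleaner, a heavier light mass, goes the wrong way. Suppose the light atoms carry total mass $c\varepsilon$. You then need more than $(r-1)/c$ of them unseen, i.e.\ $K<(r-1)(1-1/c)$, while $\mathbb{E}K\le c(r-1)/8$. Markov bounds the bad event by $c^2/(8(c-1))$:
\begin{itemize}
  \item at $c=4$ this is $2/3$, your value;
  \item at $c=8$ it is $8/7$, which is vacuous;
  \item it is minimized exactly at $c=2$, where it equals $1/2$.
\end{itemize}
A heavier light mass also forces $\varepsilon<1/8$, which changes the theorem's hypothesis. The phrase ``up to integer rounding'' concerns $m$, not the confidence constant, so it gives you no slack here.

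The missing step is the right choice of constants, which is exactly what the paper uses. Take $\mu(e_1)=1-2\varepsilon$ and $\mu(e_j)=2\varepsilon/(r-1)$ for $j\ge 2$.
\begin{itemize}
  \item For $m\le (r-1)/(8\varepsilon)$ you get $\mathbb{E}K=2\varepsilon m\le (r-1)/4$.
  \item Markov then gives $\Pr[K\ge (r-1)/2]\le 1/2$.
  \item On the complement, strictly more than $(r-1)/2$ of the independent light atoms are unseen.
  \item Hence $R_\mu(W_m)>\frac{r-1}{2}\cdot\frac{2\varepsilon}{r-1}=\varepsilon$ with probability at least $1/2>\delta$.
\end{itemize}
This works uniformly for every $r\ge 2$, so you need neither the binomial-median fact nor a separate treatment of small $r$. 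Your median-of-binomial route could probably be completed with some small-$r$ case analysis, but as written it is only a plan.
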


\noindent The proof is given in Appendix~\ref{app:proof-optimal-sample}.

\subsection{IND-CPA impossibility for public linear key transport}
\label{subsec:indcpa}

We now translate the sampled-span phenomenon into a generic cryptographic
impossibility result.

Consider a family of public-key encryption schemes indexed by the security
parameter $\lambda$.  The public key specifies a finite-dimensional vector
space $V_\lambda$, a distribution $\mathcal{D}_{pk}$ over public randomness,
and a public polynomial-time paired-sample procedure
\[
    \mathsf{Pair}(pk;\rho)=(X_\rho,Y_\rho),
    \qquad
    \rho\leftarrow\mathcal{D}_{pk}.
\]
Assume there is a fixed secret linear map
$A_{sk}\in\operatorname{End}_{\mathbb{F}}(V_\lambda)$ such that
\begin{equation}
    Y_\rho=A_{sk}X_\rho
    \label{eq:pke-transport}
\end{equation}
for every valid $\rho$.
Let the encryption randomness be $(\rho,\sigma)$, where
$\rho\leftarrow\mathcal{D}_{pk}$ determines the public linear-transport
component and $\sigma$ denotes any additional randomness used to form the
payload.  We consider ciphertexts of the general form
\begin{equation}
    \operatorname{Enc}(pk,m;\rho,\sigma)
    =
    \left(
        X_\rho,\,
        C_{\mathrm{payload}}
    \right),
    \label{eq:linear-transport-pke}
\end{equation}
and the transported value $Y_\rho$ is sufficient to recover
the plaintext from the payload.  Formally, assume there is a public
probabilistic polynomial-time algorithm $\mathsf{RecoverPayload}$ and a
negligible function $\nu$ such that, for every valid public key and every
message $m$, the probability of recovery failure is negligible, namely
\begin{equation}
    \Pr\left[
        \mathsf{RecoverPayload}
        \left(
            pk,C_{\mathrm{payload}},Y_\rho
        \right)
        =m
    \right]
    \geq
    1-\nu(\lambda),
    \label{eq:recoverable-payload}
\end{equation}
where the probability is over the encryption and recovery randomness.  We
refer to \eqref{eq:recoverable-payload} as the \emph{recoverable-payload
condition}.

The masking construction
$C_{\mathrm{payload}}=m\oplus H_\lambda(Y_\rho)$ is one special case.  The
condition also covers KEM--DEM case with encryption \(\mathsf{Enc}_{\mathrm{sym}}\) and decryption \(\mathsf{Dec}_{\mathrm{sym}}\), namely
\[
    K=\mathsf{KDF}_\lambda(Y_\rho),
    \qquad
    C_{\mathrm{payload}}
    =
    \mathsf{Enc}_{\mathrm{sym}}(K,m;\sigma),
\]
since the recovery algorithm can derive $K$ from $Y_\rho$ and run
$\mathsf{Dec}_{\mathrm{sym}}$.

\begin{theorem}[IND-CPA no-go theorem for recoverable payloads]
\label{thm:indcpa-nogo}
Suppose the scheme \eqref{eq:linear-transport-pke} satisfies the public
linear-transport condition \eqref{eq:pke-transport} and the
recoverable-payload condition \eqref{eq:recoverable-payload}.  Let
\[
    r(pk)
    :=
    \dim
    \operatorname{span}
    \left\{
        X_\rho:
        \rho\in\operatorname{supp}(\mathcal{D}_{pk})
    \right\}.
\]
Assume that there is a publicly known polynomial $R(\lambda)$ such that $r(pk)\leq R(\lambda)$
for all valid public keys, and that paired sampling and linear algebra in
$V_\lambda$ are polynomial-time operations.

Then there exists a probabilistic polynomial-time adversary with IND-CPA
advantage at least
\[
    \frac14-\nu(\lambda).
\]
Consequently, the scheme is not IND-CPA secure.  Under perfect payload
correctness, the advantage is at least $\frac14$.
\end{theorem}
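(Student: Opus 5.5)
We build an explicit IND--CPA adversary from the one-shot attack of Proposition~\ref{prop:one-shot} and Lemma~\ref{lem:coefficient-transfer}. On input $pk$, the adversary first draws two distinct messages $m_0\neq m_1$ of equal length and submits them to the challenger. It then fixes
\[
    m:=2R(\lambda)-1,
\]
which is polynomial in $\lambda$, and draws $\rho_1,\ldots,\rho_m\leftarrow\mathcal{D}_{pk}$ independently. For each $i$ it computes $(X_i,Y_i):=\mathsf{Pair}(pk;\rho_i)$. All of this uses only the public interface.

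On receiving the challenge $(X^\star,C^\star_{\mathrm{payload}})=\operatorname{Enc}(pk,m_b;\rho^\star,\sigma^\star)$, the adversary uses Gaussian elimination to test whether $X^\star\in\operatorname{span}(X_1,\ldots,X_m)$.
\begin{itemize}
\item If the test succeeds, it obtains coefficients $\alpha_i$ and sets $\widehat Y:=\sum_i\alpha_iY_i$. It runs $\mathsf{RecoverPayload}(pk,C^\star_{\mathrm{payload}},\widehat Y)$ and outputs $b'$ if the result equals $m_{b'}$.
\item In every other case (membership fails, or the recovered value is neither $m_0$ nor $m_1$), it outputs a uniformly random bit.
\end{itemize}
Every step is polynomial time by the hypotheses on sampling and linear algebra.

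For the analysis, note that $\rho^\star$ is drawn from $\mathcal{D}_{pk}$ independently of the adversary's samples, and its distribution does not depend on $b$. So $X_1,\ldots,X_m,X^\star$ are i.i.d.\ from $\mu$, whose span has dimension $r(pk)\le R(\lambda)$. Proposition~\ref{prop:one-shot} then gives
\[
    \Pr[G]\ge 1-\frac{r(pk)}{m+1}\ge 1-\frac{R(\lambda)}{2R(\lambda)}=\frac12,
\]
where $G$ is the event that $X^\star$ lies in the sampled span. On $G$, Lemma~\ref{lem:coefficient-transfer} together with \eqref{eq:pke-transport} gives $\widehat Y=Y_{\rho^\star}$ exactly. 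Note that $\widehat Y$ is therefore independent of the particular coefficients chosen, even when the $\alpha_i$ are not unique.

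Let $E$ be the event that $\mathsf{RecoverPayload}(pk,C^\star_{\mathrm{payload}},Y_{\rho^\star})=m_b$. The main obstacle is conditioning: \eqref{eq:recoverable-payload} bounds the probability of $E$ unconditionally, not on $G$. We avoid conditioning by a union bound. On $G\cap E$ the adversary outputs $b$ with certainty. Off $G\cap E$ its output is either a uniform guess or a deterministic decoding to some $m_{b'}$. We use two facts:
\begin{itemize}
\item $\Pr[G\cap E]\ge \Pr[G]-\Pr[\neg E]\ge\frac12-\nu$.
\item On $G\cap\neg E$ the adversary may output the wrong bit, but this event has probability at most $\nu$.
\end{itemize}
Splitting the outcome into $G\cap E$, $\neg G$ (a uniform guess), and $G\cap\neg E$ gives
\[
    \Pr[b'=b]\ \ge\ \Pr[G\cap E]+\tfrac12\Pr[\neg G]\ \ge\ \tfrac12+\tfrac12\Pr[G]-\nu\ \ge\ \tfrac34-\nu.
\]
The advantage, taken as $\Pr[b'=b]-\frac12$, is therefore at least $\frac14-\nu(\lambda)$. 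With perfect payload correctness we have $\nu=0$, giving advantage at least $\frac14$. Since this is non-negligible, the scheme is not IND--CPA secure.

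Two points need checking. First, the adversary needs the fixed polynomial bound $R(\lambda)$ rather than the unknown $r(pk)$ to choose $m$; the theorem supplies exactly this. Second, the recovery randomness of $\mathsf{RecoverPayload}$ is drawn independently of the sampling event $G$, which is needed for the estimate $\Pr[G\cap E]\ge \Pr[G]-\Pr[\neg E]$ to use the unconditional bound \eqref{eq:recoverable-payload}.
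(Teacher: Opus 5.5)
Your proposal is correct and follows the paper's proof essentially step for step: the same adversary with $m=2R(\lambda)-1$ self-generated pairs, Proposition~\ref{prop:one-shot} to get span probability at least $1/2$, and the same estimate $\Pr[b'=b]\ge\Pr[G\cap E]+\tfrac12\Pr[\neg G]\ge\tfrac34-\nu(\lambda)$. One small remark: the union bound $\Pr[G\cap E]\ge\Pr[G]-\Pr[\neg E]$ holds for arbitrary events, so independence of the recovery coins from $G$ is not needed. What you actually use is that the challenge randomness and recovery coins have the law assumed in \eqref{eq:recoverable-payload}, which gives $\Pr[\neg E]\le\nu(\lambda)$.
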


\noindent The proof is given in Appendix~\ref{app:proof-indcpa-nogo}.

\begin{remark}[Stronger future-decryption interpretation]
Theorem~\ref{thm:indcpa-nogo} requires only the constant-success
one-shot bound.  Theorem~\ref{thm:distribution-free} gives a stronger
interpretation: after
\[
    m
    =
    O\left(
        \frac{
            r(pk)+\ln(1/\delta)
        }{\varepsilon}
    \right)
\]
self-generated public samples, with probability at least $1-\delta$ over the
preprocessing samples, the same sampled span permits exact recovery of the
transported value and hence payload recovery for at least a
$1-\varepsilon$ fraction of all future encryption randomness, up to the
negligible recovery error $\nu(\lambda)$.
\end{remark}
\section{Application to Twisted--Skew Group-Ring Protocols}
\label{sec:twisted-skew-pke}

We instantiate the sampler-only framework against the computational assumption
and probabilistic PKE of twisted--skew group-ring protocols proposed by
de la Cruz, Mart\'inez-Moro, Mu\~noz-Ruiz, and Villanueva-Polanco
\cite{CruzMartinezMunozVillanueva2024}.  The attack uses only public sampling
and evaluation together with linear algebra over $\mathbb{F}_q$.

\subsection{The 2024 construction}
\label{subsec:twisted-skew-construction}

Let
\[
    \mathcal{R}=\mathbb{F}_{q^2}^{\theta_\sigma,\alpha_\lambda}G
\]
be the twisted--skew group-ring space of the scheme.  We reproduce only
Algorithms~6--8 and their correctness relation.  Writing the original
two-sided action with explicit evaluation order as
\[
    \psi((a,\gamma),x):=(ax)\gamma,
\]
Alice samples
$\mathsf{sk}=(a_1,\gamma_1)$ and publishes
\begin{equation*}
    \mathsf{pk}=(a_1h)\gamma_1.
\end{equation*}
For fresh encryption randomness $\rho=(a_2,\gamma_2)$, encryption computes
\begin{equation*}
    X_\rho:=c_1=(a_2h)\gamma_2
\end{equation*}
and
\begin{equation*}
    c_2=m+Y_\rho,
    \qquad
    Y_\rho:=(a_2\mathsf{pk})\widehat{\gamma_2}.
\end{equation*}
Here and below, the displayed parentheses specify the order of multiplication;
we do not reassociate products.  Decryption subtracts
\begin{equation*}
    (a_1c_1)\widehat{\gamma_1}
\end{equation*}
from $c_2$.  The correctness theorem of the construction establishes, for
every valid encryption randomness,
\begin{equation}
    (a_1c_1)\widehat{\gamma_1}
    =
    (a_2\mathsf{pk})\widehat{\gamma_2}.
    \label{eq:ts-correctness}
\end{equation}
These are exactly the quantities computed in Algorithms~6--8 and related by
Theorem~3 of the original paper
\cite{CruzMartinezMunozVillanueva2024}.

\subsection{Fixed linear transport and self-generated pairs}
\label{subsec:twisted-skew-transport}

Define the secret-dependent, fixed-parenthesis map
\begin{equation}
    \mathcal{A}_{\mathsf{sk}}:\mathcal{R}\longrightarrow\mathcal{R},
    \qquad
    \mathcal{A}_{\mathsf{sk}}(x)
    :=
    (a_1x)\widehat{\gamma_1}.
    \label{eq:ts-secret-map}
\end{equation}

\begin{lemma}[Twisted--skew linear transport]
\label{lem:twisted-skew-transport}
The map $\mathcal{A}_{\mathsf{sk}}$ is $\mathbb{F}_q$-linear.  Moreover, every
honest encryption randomness $\rho=(a_2,\gamma_2)$ satisfies
\begin{equation*}
    Y_\rho=\mathcal{A}_{\mathsf{sk}}(X_\rho).
\end{equation*}
\end{lemma}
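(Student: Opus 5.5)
The proof has two parts: first $\mathbb{F}_q$-linearity of $\mathcal{A}_{\mathsf{sk}}$, then the transport identity, which will follow directly from the correctness relation \eqref{eq:ts-correctness}.

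\textbf{Linearity.} I will write $\mathcal{A}_{\mathsf{sk}}$ as a composition of two maps with the secret fixed:
\[
    L_{a_1}(x)=a_1x,
    \qquad
    R_{\widehat{\gamma_1}}(y)=y\,\widehat{\gamma_1},
\]
so that $\mathcal{A}_{\mathsf{sk}}=R_{\widehat{\gamma_1}}\circ L_{a_1}$.

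The multiplication on $\mathcal{R}=\mathbb{F}_{q^2}^{\theta_\sigma,\alpha_\lambda}G$ is defined on basis elements by
\[
    (a g)(b h)=a\,\theta_\sigma(g)(b)\,\alpha_\lambda(g,h)\,gh,
\]
and extended bilinearly over the additive group. It is not associative, but it is biadditive, so each of $L_{a_1}$ and $R_{\widehat{\gamma_1}}$ is additive.

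For homogeneity, take $c\in\mathbb{F}_q$. The twisting automorphisms $\theta_\sigma(g)$ are powers of the Frobenius of $\mathbb{F}_{q^2}/\mathbb{F}_q$, so they fix $\mathbb{F}_q$ pointwise. Also, $\mathbb{F}_q$-scalars are central in $\mathbb{F}_{q^2}$ and are absorbed into coefficients. Together these give
\[
    a_1(cx)=c(a_1x),
    \qquad
    (cy)\widehat{\gamma_1}=c(y\widehat{\gamma_1}).
\]
This is the step I expect to be the main obstacle. It needs the exact form of the twist and cocycle from the 2024 construction, and I must confirm that the scalar passes through \emph{both} the $\theta_\sigma$ action and the factor $\alpha_\lambda$ without reassociating any product. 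In particular, $\mathbb{F}_q$-linearity fails for $\mathbb{F}_{q^2}$-scalars on the right factor, which is exactly why the lemma claims only $\mathbb{F}_q$-linearity. Since a composition of $\mathbb{F}_q$-linear maps is $\mathbb{F}_q$-linear, this settles the first claim.

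\textbf{Transport.} Fix honest randomness $\rho=(a_2,\gamma_2)$, so $X_\rho=c_1=(a_2h)\gamma_2$. By definition,
\[
    \mathcal{A}_{\mathsf{sk}}(X_\rho)=(a_1c_1)\widehat{\gamma_1}.
\]
The correctness relation \eqref{eq:ts-correctness}, i.e.\ Theorem~3 of \cite{CruzMartinezMunozVillanueva2024}, states that this equals $(a_2\mathsf{pk})\widehat{\gamma_2}=Y_\rho$ for every valid $\rho$. Hence $Y_\rho=\mathcal{A}_{\mathsf{sk}}(X_\rho)$.

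Finally, the pair $(X_\rho,Y_\rho)$ is computable from $\mathsf{pk}$, $h$ and $\rho$ alone. So $\mathsf{Pair}(\rho)$ realizes \eqref{eq:paired-linear-samples} with $A=\mathcal{A}_{\mathsf{sk}}$ on the $4n$-dimensional $\mathbb{F}_q$-space $\mathcal{R}$.
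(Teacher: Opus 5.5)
Your proposal is correct and follows essentially the same route as the paper. Like the paper, you write $\mathcal{A}_{\mathsf{sk}}=R_{\widehat{\gamma_1}}\circ L_{a_1}$, get $\mathbb{F}_q$-linearity from biadditivity of the product rule and the fact that each $\theta_\sigma(g)$ fixes $\mathbb{F}_q$ (without reassociating), and then substitute $c_1=X_\rho$ into the correctness identity \eqref{eq:ts-correctness}. The step you flag as the main obstacle is actually immediate: $\alpha_\lambda(g,h)$ is a scalar in the commutative coefficient field, so once an $\mathbb{F}_q$-scalar has passed through $\theta_\sigma(g)$ it commutes past the cocycle factor.
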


\noindent The proof is given in
Appendix~\ref{app:proof-twisted-skew-transport}.  Importantly, the proof uses
only $\mathbb{F}_q$-bilinearity with the parentheses in
\eqref{eq:ts-secret-map}; it never invokes associativity.

\subsection{Direct attack on the CTSP assumption}
\label{subsec:twisted-skew-ctsp}

The original paper defines the \emph{Computational Twisted--Skew Product}
(CTSP) through its Algorithm~2.  The challenger independently samples
$(a_1,\gamma_1),(a_2,\gamma_2)\leftarrow\mathsf{SK}$ and gives the adversary
\[
    pk_1=\psi((a_1,\gamma_1),h),
    \qquad
    pk_2=\psi((a_2,\gamma_2),h).
\]
The target is
\[
    k=\psi((a_2,\widehat{\gamma_2}),pk_1).
\]
For a fixed $pk_1$, write
\[
    \mathcal{A}_{pk_1}(x):=(a_1x)\widehat{\gamma_1}
\]
for the secret-dependent map associated with its hidden generating pair.
Lemma~\ref{lem:twisted-skew-transport} and the correctness identity give
\[
    k=\mathcal{A}_{pk_1}(pk_2).
\]

The adversary can independently sample $(b_i,\eta_i)\leftarrow\mathsf{SK}$
and publicly compute
\[
    X_i:=\psi((b_i,\eta_i),h),
    \qquad
    Y_i:=\psi((b_i,\widehat{\eta_i}),pk_1).
\]
The same identity gives $Y_i=\mathcal{A}_{pk_1}(X_i)$.  Let
\[
    \mu_{pk_1}:=\operatorname{Law}
    \bigl(\psi((b,\eta),h)\bigr),
    \qquad
    (b,\eta)\leftarrow\mathsf{SK},
\]
and define
\[
    r_{pk_1}
    :=
    \dim_{\mathbb{F}_q}
    \operatorname{span}_{\mathbb{F}_q}
    \bigl(\operatorname{supp}\mu_{pk_1}\bigr).
\]

\begin{corollary}[Sampler-only solution of CTSP]
\label{cor:twisted-skew-ctsp}
Given a CTSP instance $(pk_1,pk_2)$ and $m$ independently generated public
pairs $(X_i,Y_i)$ as above, a polynomial-time adversary recovers the CTSP
target with probability at least
\begin{equation}
    \Pr[\widehat{k}=k]
    \geq
    1-\frac{r_{pk_1}}{m+1}.
    \label{eq:ts-ctsp-bound}
\end{equation}
If $|G|=2n$, then $r_{pk_1}\leq4n$; hence $m=8n-1$ gives CTSP success
probability at least $1/2$.  In particular, the CTSP assumption does not hold
whenever this ambient dimension and the public operations are polynomial in
the security parameter, as in the proposed parameter families.
\end{corollary}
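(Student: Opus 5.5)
The plan is to reduce Corollary~\ref{cor:twisted-skew-ctsp} to Proposition~\ref{prop:one-shot} together with Lemma~\ref{lem:coefficient-transfer}. The reduction needs three ingredients: an identically distributed target, a valid transport identity for every pair, and a dimension bound.

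\textbf{Step 1: i.i.d.\ structure.} I first check that the challenger's $pk_2=\psi((a_2,\gamma_2),h)$ is distributed as $\mu_{pk_1}$. This holds because $(a_2,\gamma_2)\leftarrow\mathsf{SK}$ is drawn independently of $(a_1,\gamma_1)$. The adversary's pairs $(b_i,\eta_i)$ are fresh independent draws from the same $\mathsf{SK}$. Hence, conditioned on $pk_1$ (equivalently, on the hidden $(a_1,\gamma_1)$), the vectors $X_1,\ldots,X_m,pk_2$ are i.i.d.\ with law $\mu_{pk_1}$. The fixed map is $\mathcal A_{pk_1}(x)=(a_1x)\widehat{\gamma_1}$.

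\textbf{Step 2: transport identity.} For every sample, the correctness identity \eqref{eq:ts-correctness}, with $(a_2,\gamma_2)$ replaced by $(b_i,\eta_i)$, gives $Y_i=\mathcal A_{pk_1}(X_i)$. Similarly, $k=\mathcal A_{pk_1}(pk_2)$. Lemma~\ref{lem:twisted-skew-transport} shows that $\mathcal A_{pk_1}$ is $\mathbb F_q$-linear. We are therefore exactly in the paired-sample setting of Section~\ref{sec:prelim}, with $V=\mathcal R$ viewed as an $\mathbb F_q$-space.

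\textbf{Step 3: apply the one-shot bound.} Conditioned on $pk_1$, Proposition~\ref{prop:one-shot} gives
\[
\Pr\bigl[pk_2\in\operatorname{span}(X_1,\ldots,X_m)\mid pk_1\bigr]\ge 1-\frac{r_{pk_1}}{m+1}.
\]
On that event, Gaussian elimination finds coefficients $\alpha_i$ with $pk_2=\sum_i\alpha_iX_i$. The adversary outputs $\widehat k=\sum_i\alpha_iY_i$, and Lemma~\ref{lem:coefficient-transfer} gives $\widehat k=k$. If $r_{pk_1}$ is replaced by a uniform upper bound, we may average over $pk_1$ and obtain the unconditional bound.

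\textbf{Step 4: dimension count and running time.} Here $\mathcal R$ has $\mathbb F_{q^2}$-basis indexed by $G$, so $\dim_{\mathbb F_q}\mathcal R=2|G|=4n$. Hence $r_{pk_1}\le 4n$. Taking $m=8n-1$ gives success probability at least $1-4n/(8n)=1/2$. The cost is $m$ samplings and evaluations of $\psi$ plus one linear system over $\mathbb F_q$ in dimension $4n$ with $m$ unknowns. Both are polynomial whenever $n$, $\log q$, and the group operations are polynomial in the security parameter, which gives the final claim.

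\textbf{Main obstacle.} The step most likely to hide a subtlety is Step~2. The correctness identity must be stated for arbitrary pairs drawn from $\mathsf{SK}$ and not only for the specific encryption pair. In particular, $\widehat{\eta_i}$ must be publicly computable from $\eta_i$, so that $Y_i$ is computable without $(a_1,\gamma_1)$. I would verify both points directly from the definitions of $\mathsf{SK}$ and $\widehat{\cdot}$ in the original construction.
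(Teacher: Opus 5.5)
Your proposal is correct and follows the paper's proof step for step: condition on $pk_1$ so that $X_1,\ldots,X_m,pk_2$ are i.i.d.\ from $\mu_{pk_1}$, use the correctness identity to get both $Y_i=\mathcal{A}_{pk_1}(X_i)$ and $k=\mathcal{A}_{pk_1}(pk_2)$, apply Proposition~\ref{prop:one-shot} with Lemma~\ref{lem:coefficient-transfer}, and bound $r_{pk_1}\le\dim_{\mathbb{F}_q}\mathcal{R}=2|G|=4n$. The subtlety you flag is real but not a gap: the correctness identity must hold for arbitrary pairs drawn from $\mathsf{SK}$, and $\widehat{\eta_i}$ must be publicly computable. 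The paper relies on exactly this, implicitly, when it invokes ``the same correctness identity'' for the adversary's pairs $(b_i,\eta_i)$.
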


\noindent The proof is given in Appendix~\ref{app:proof-twisted-skew-ctsp}.
The samples use only the public sampling-and-evaluation operations available to
an honest participant; no explicit generators or precomputed basis of the
sampled action space are required.

\subsection{Plaintext recovery and IND--CPA consequence}
\label{subsec:twisted-skew-break}

The PKE challenge is precisely the CTSP computation followed by additive
masking: set $pk_1=\mathsf{pk}$, $pk_2=X^\star=c_1^\star$, and
$k=Y^\star$.  The challenge payload is $c_2^\star=m^\star+Y^\star$.
Moreover, because the message space is additive and encryption is public, an
adversary can generate the CTSP training pairs by computing
\[
    \mathsf{Enc}(\mathsf{pk},0;\rho_i)=(X_i,c_{2,i}),
    \qquad
    c_{2,i}=Y_i=\mathcal{A}_{\mathsf{sk}}(X_i).
\]
These are self-generated public encryptions.  They query neither the secret
key nor a decryption oracle.  Write $r_{\mathsf{pk}}:=r_{pk_1}$ for the
corresponding effective rank.

\begin{corollary}[Sampler-only break of the 2024 PKE]
\label{cor:twisted-skew-break}
Given $m$ independently generated encryptions of zero, a polynomial-time
adversary recovers the plaintext of an independent fresh ciphertext with
probability at least
\begin{equation}
    1-\frac{r_{\mathsf{pk}}}{m+1}.
    \label{eq:ts-recovery-bound}
\end{equation}
If $|G|=2n$, then $r_{\mathsf{pk}}\leq4n$.  Consequently, $m=8n-1$
self-generated encryptions of zero give plaintext-recovery probability at
least $1/2$ and IND--CPA advantage at least $1/4$ under the convention
\[
    \operatorname{Adv}^{\mathrm{IND\mbox{-}CPA}}
    =
    \left|\Pr[b'=b]-\frac12\right|.
\]
\end{corollary}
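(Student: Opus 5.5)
The plan is to reduce Corollary~\ref{cor:twisted-skew-break} to Corollary~\ref{cor:twisted-skew-ctsp}, and through it to Proposition~\ref{prop:one-shot}. It then derives the IND--CPA advantage by a standard guessing argument.

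\textbf{Training pairs.} First, I check that the self-generated encryptions of zero are exactly paired samples of the required form. For $\rho_i=(a_{2,i},\gamma_{2,i})\leftarrow\mathsf{SK}$, encryption of $0$ outputs
\[
    (X_i,c_{2,i})=\bigl((a_{2,i}h)\gamma_{2,i},\,0+Y_{\rho_i}\bigr).
\]
By Lemma~\ref{lem:twisted-skew-transport}, $c_{2,i}=\mathcal{A}_{\mathsf{sk}}(X_i)$. These are i.i.d.\ draws $(X_i,\mathcal{A}_{\mathsf{sk}}(X_i))$ with $X_i\sim\mu_{\mathsf{pk}}$, the same law as in the CTSP sampler with $pk_1=\mathsf{pk}$.

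\textbf{The fresh ciphertext.} The challenge $c^\star=(c_1^\star,c_2^\star)$ uses independent randomness, so $X^\star=c_1^\star\sim\mu_{\mathsf{pk}}$ independently of the training samples. By Lemma~\ref{lem:twisted-skew-transport}, $Y^\star=\mathcal{A}_{\mathsf{sk}}(X^\star)$.

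\textbf{Recovery.} The adversary runs Gaussian elimination over $\mathbb{F}_q$ in $\mathcal{R}\cong\mathbb{F}_q^{4n}$. If $X^\star=\sum_i\alpha_iX_i$, Lemma~\ref{lem:coefficient-transfer} gives $\widehat Y^\star=\sum_i\alpha_i c_{2,i}=Y^\star$. The adversary then outputs $c_2^\star-\widehat Y^\star=m^\star$. Proposition~\ref{prop:one-shot} bounds the probability of this event below by $1-r_{\mathsf{pk}}/(m+1)$, which gives \eqref{eq:ts-recovery-bound}.

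\textbf{Dimension bound.} $\mathcal{R}$ is a free $\mathbb{F}_{q^2}$-module of rank $|G|=2n$, so $\dim_{\mathbb{F}_q}\mathcal{R}=2\cdot 2n=4n$. Since $\operatorname{supp}\mu_{\mathsf{pk}}\subseteq\mathcal{R}$, we get $r_{\mathsf{pk}}\le 4n$. Taking $m=8n-1$ gives
\[
    1-\frac{4n}{8n}=\frac12.
\]
The running time is polynomial: $8n-1$ encryptions plus elimination on a $4n\times(8n-1)$ system.

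\textbf{IND--CPA.} The adversary submits two distinct messages $m_0\neq m_1$ and receives an encryption of $m_b$. It runs the recovery above.
\begin{itemize}
    \item If $X^\star\in W_m$, it recovers $m_b$ exactly, and so it outputs $b'=b$.
    \item Otherwise it outputs a uniformly random bit.
\end{itemize}
Let $p\ge\frac12$ be the coverage probability. Coverage depends only on $(X_1,\dots,X_m,X^\star)$, whose law is independent of $b$ because $X^\star$ does not depend on the message. Therefore
\[
    \Pr[b'=b]=p+\tfrac12(1-p)=\tfrac12+\tfrac p2\ge\tfrac34,
\]
and the advantage is at least $\frac14$. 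Alternatively, this follows from Theorem~\ref{thm:indcpa-nogo} with $\nu=0$, since recovery is perfect.

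\textbf{Main obstacle.} The real care is in verifying that the law of the challenge $X^\star$ is identical to the training law and independent of $b$. This holds because $c_1$ depends only on $\rho$ and not on the message. The rest is a direct invocation of earlier results.
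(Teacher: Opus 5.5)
Your proposal is correct and follows essentially the same route as the paper. Both treat self-generated encryptions of zero as paired samples $(X_i,\mathcal{A}_{\mathsf{sk}}(X_i))$ and apply the one-shot span bound with coefficient transfer to recover the mask $Y^\star$ and hence $m^\star=c_2^\star-\widehat{Y}^\star$; both then use the count $\dim_{\mathbb{F}_q}\mathcal{R}=4n$ with $m=8n-1$ and the guessing bound $p+\frac{1}{2}(1-p)\geq\frac{3}{4}$. Your explicit note that the coverage event's law does not depend on $b$, because $c_1$ is message-independent, is a detail the paper leaves implicit.
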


\noindent The proof is given in Appendix~\ref{app:proof-twisted-skew-break}.
The $8n-1$ count is a worst-case ambient-dimension guarantee: the effective
rank $r_{\mathsf{pk}}$ may be substantially smaller.  Theorem~\ref{thm:distribution-free}
also gives a high-probability certificate for the fraction of all future
ciphertexts covered by the learned span, without assuming uniformity,
coordinate independence, or anti-concentration.

\subsection{Relation to prior linear-decomposition attacks}
\label{subsec:twisted-skew-related}

Random sampling in linear-decomposition cryptanalysis is not new.
Roman'kov observed that random vectors may replace an explicitly constructed
basis under a protocol-specific uniform-sampling model
\cite{Romankov2018RandomSpanning}.  Earlier linear-decomposition and
algebraic-span attacks likewise establish the coefficient-transfer principle
\cite{MyasnikovRomankov2014LDA,BenZviKalkaTsaban2018}.  Our contribution here
is instead to use only the distribution induced by honest public sampling and
encryption, allow that distribution to be arbitrary, and quantify how a finite
sampled span covers an independently generated target.

Some generic two-sided-multiplication theorems impose hypotheses that differ
from this construction, such as associativity or explicit public generators
and bases for the relevant action spaces
\cite{Romankov2017TwoSided}.  Later attacks on earlier twisted-dihedral or
two-sided-action proposals likewise use scheme-specific hypotheses
\cite{Tinani2022TwistedDihedral,OteroSanchez2025TwoSide}.  These differences
mean that such results do not apply verbatim; they do not imply that classical
basis-based linear-decomposition methods cannot be adapted here.  We therefore
do not claim that sampler-only access is necessary to break the construction.

To the best of our knowledge, we are unaware of a prior explicit
cryptanalysis of the CTSP or Algorithms~6--8 of this exact 2024 construction.
Our contribution is a sampler-only, distribution-free attack requiring no
explicit generators or precomputed action basis, with a finite-sample CTSP
guarantee and formal plaintext-recovery and IND--CPA consequences.  We do not
claim that coefficient transfer or random-spanning ideas themselves are new.
\section{Experimental Evaluation}
\label{sec:experiments}

We organize the evaluation as one four-step validation of the cryptanalytic
story developed above:
\begin{center}
\fbox{\parbox{0.90\columnwidth}{\centering
valid setup $\longrightarrow$ end-to-end break $\longrightarrow$
explanation of the native threshold $\longrightarrow$ separation between rank
and probability-mass coverage.}}
\end{center}
The experiments use exact finite-field arithmetic; no numerical tolerance is
involved in an equality, rank, span-membership, or plaintext-recovery test.  We
implement the twisted--skew product and adjunct with the evaluation order used
by the construction, represent each ring element by its $4n$ coordinates over
$\mathbb{F}_{19}$, and use the dihedral parameter sets
$n\in\{20,23,32\}$.  For each $n$, Experiments~2--4 use 20 independently
generated public setups.  The exact sampled-orbit rank $r_{\mathsf{pk}}$ is
computed separately for every public $h$ by evaluating the bilinear sampler map
on bases of its two input spaces.

For the native-sampler curves, the sample budget ranges from
$r_{\mathsf{pk}}-10$ to $r_{\mathsf{pk}}+10$.  At each budget we generate five
independent learned spans per setup, increased to 25 spans whenever
$|m-r_{\mathsf{pk}}|\leq2$, and test 50 independent future ciphertexts against
each span.  Shaded regions and error bars are 95\% percentile intervals from a
two-stage bootstrap that resamples public setups and then learned spans within
each setup.  Thus ciphertexts tested against the same span are not incorrectly
treated as independent statistical units.

\subsection{Does the implementation realize the attack model?}
\label{subsec:exp-validation}

The first step checks that the concrete implementation has exactly the
structure required by the theory.  For the $n=20$ setup, we independently test
three identities over 200 trials: $\mathbb{F}_{19}$-linearity of the fixed
secret map, the paired-sample identity
$\mathcal{A}_{\mathsf{sk}}(X_\rho)=Y_\rho$, and encryption/decryption
correctness.  Each linearity trial samples independent $x,z\in\mathcal{R}$ and
$\alpha,\beta\in\mathbb{F}_{19}$ and checks
\[
    \mathcal{A}_{\mathsf{sk}}(\alpha x+\beta z)
    =
    \alpha\mathcal{A}_{\mathsf{sk}}(x)
    +
    \beta\mathcal{A}_{\mathsf{sk}}(z).
\]
The other tests use freshly sampled valid protocol randomness and messages.

\begin{table}[t]
    \caption{Exact algebraic validation for $p=q=19$, $n=20$, and
    $\dim_{\mathbb{F}_{19}}\mathcal{R}=80$.}
    \label{tab:algebraic-validation}
    \centering
    \small
    \begin{tabular}{lrr}
        \toprule
        Check & Trials & Violations \\
        \midrule
        $\mathbb{F}_{19}$-linearity & 200 & 0 \\
        Paired-sample transport & 200 & 0 \\
        Decryption correctness & 200 & 0 \\
        \bottomrule
    \end{tabular}
\end{table}

Table~\ref{tab:algebraic-validation} reports no violation.  This experiment is
not evidence from approximate numerical behavior: it verifies equality of the
full ring elements.  It therefore confirms that the implemented PKE exposes the
fixed linear transport assumed in Sections~\ref{subsec:setup}
and~\ref{subsec:twisted-skew-transport}.

\subsection{Does the sampler-only attack recover plaintexts?}
\label{subsec:exp-plaintext}

Having validated the setup, we next test the security consequence end to end.
For every public key, the adversary generates $m$ public encryptions of zero,
uses their pairs $(X_i,Y_i)$ to build an incremental linear-transport basis,
and attacks fresh encryptions of independently sampled messages.  When a fresh
$X^\star$ enters the learned span, the implementation transfers the exact
decomposition coefficients to obtain $Y^\star$ and outputs
$c_2^\star-Y^\star$.

\begin{figure}[t]
    \centering
    \includegraphics[width=0.86\columnwidth]{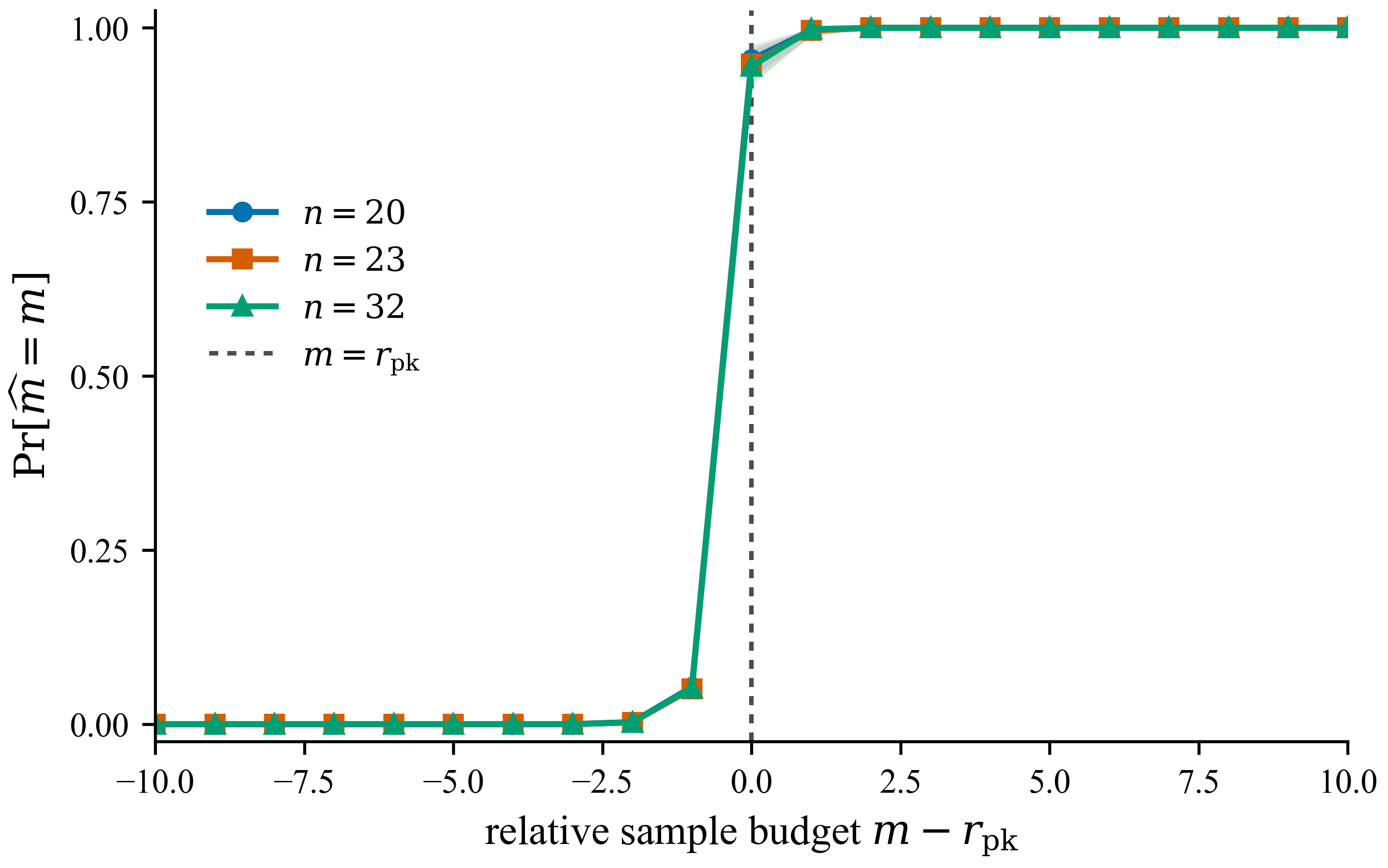}
    \caption{End-to-end plaintext recovery under the native protocol sampler.
    The horizontal coordinate is centered at the exact rank of each public
    setup.  Shading gives 95\% hierarchical-bootstrap intervals.}
    \Description{Three nearly identical recovery curves for n equal to 20, 23,
    and 32. Recovery is near zero below the exact rank, about 95 percent at the
    exact rank, and near one above it.}
    \label{fig:native-plaintext-recovery}
\end{figure}

Figure~\ref{fig:native-plaintext-recovery} shows a sharp and consistent
transition across all three parameter sizes.  At
$m=r_{\mathsf{pk}}-1$, recovery is only 5.1--5.3\%.  At
$m=r_{\mathsf{pk}}$, it rises to 95.47\%, 94.81\%, and 94.48\% for
$n=20,23,32$, respectively; one further sample raises all three rates above
99.6\%, and two further samples give 100\% in the recorded trials.  Most
importantly, every recorded span hit produced the exact challenge plaintext:
the empirical conditional probability
$\Pr[\widehat m=m\mid X^\star\in W_m]$ is one throughout the experiment.
Thus the sampled-span event is not merely a geometric diagnostic; it triggers
an actual decryption of the concrete scheme using only public encryptions.

\subsection{Why is the native recovery threshold sharp?}
\label{subsec:exp-native-threshold}

The third step explains the transition rather than repeating the attack.  We
first compute the exact sampled-orbit rank for each of the 60 public setups.
As Figure~\ref{fig:exact-rank-setups} shows, the native sampled orbit is almost
always the full ambient space: 56 of the 60 setups have
$r_{\mathsf{pk}}=4n$, while the remaining four have codimension two.  The mean
ranks are 79.8 of 80, 91.9 of 92, and 127.9 of 128 for
$n=20,23,32$.

\begin{figure}[H]
    \centering
    \includegraphics[width=0.72\columnwidth]{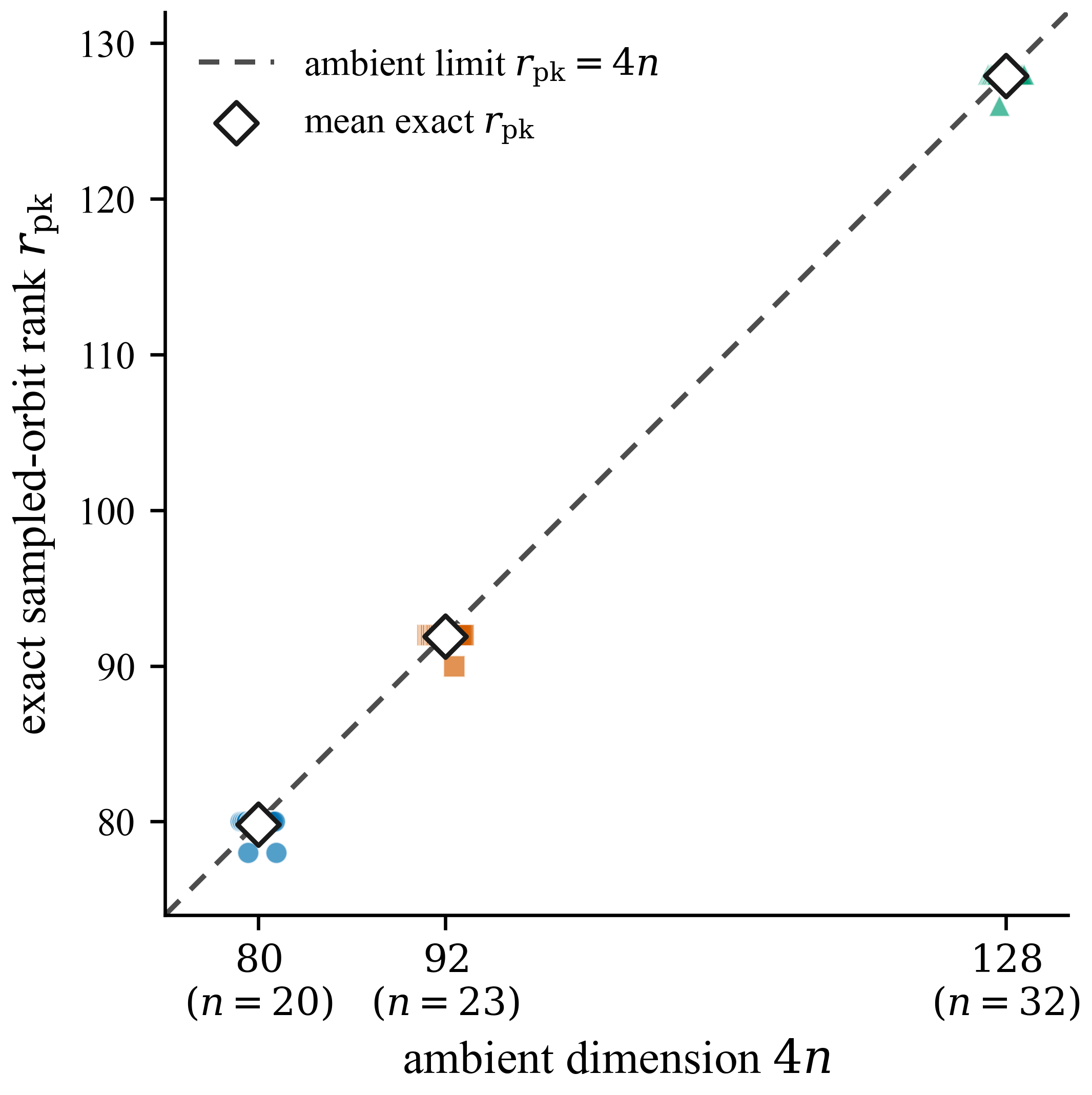}
    \caption{Exact native sampled-orbit rank over 20 independent public setups
    per parameter size.  The dashed diagonal is the ambient limit $4n$; open
    diamonds mark means.}
    \Description{Scatter plot of exact sampled-orbit rank against ambient
    dimension. Nearly all setups lie on the diagonal; four points are two rank
    units below it.}
    \label{fig:exact-rank-setups}
\end{figure}

We then measure rank acquisition and future coverage using new native samples,
independently of the plaintext-recovery experiment.  At
$m=r_{\mathsf{pk}}-1$, the learned spans already contain between 98.74\% and
99.22\% of the exact rank, yet cover only 4.94--5.37\% of future samples.  At
$m=r_{\mathsf{pk}}$, the mean rank fraction exceeds 99.93\% and future
coverage rises to 94.98--95.44\%.  Figure~\ref{fig:native-rank-fraction}, read
together with Figure~\ref{fig:native-plaintext-recovery}, therefore identifies
the mechanism:
\[
    \begin{gathered}
        \text{recovery transition}
        \\[-0.2em]
        \Longleftarrow
        \\[-0.2em]
        \text{future-coverage transition}
        \\[-0.2em]
        \Longleftarrow
        \\[-0.2em]
        \text{acquisition of the last native rank directions}.
    \end{gathered}
\]

\begin{figure}[t]
    \centering
    \includegraphics[width=\columnwidth]{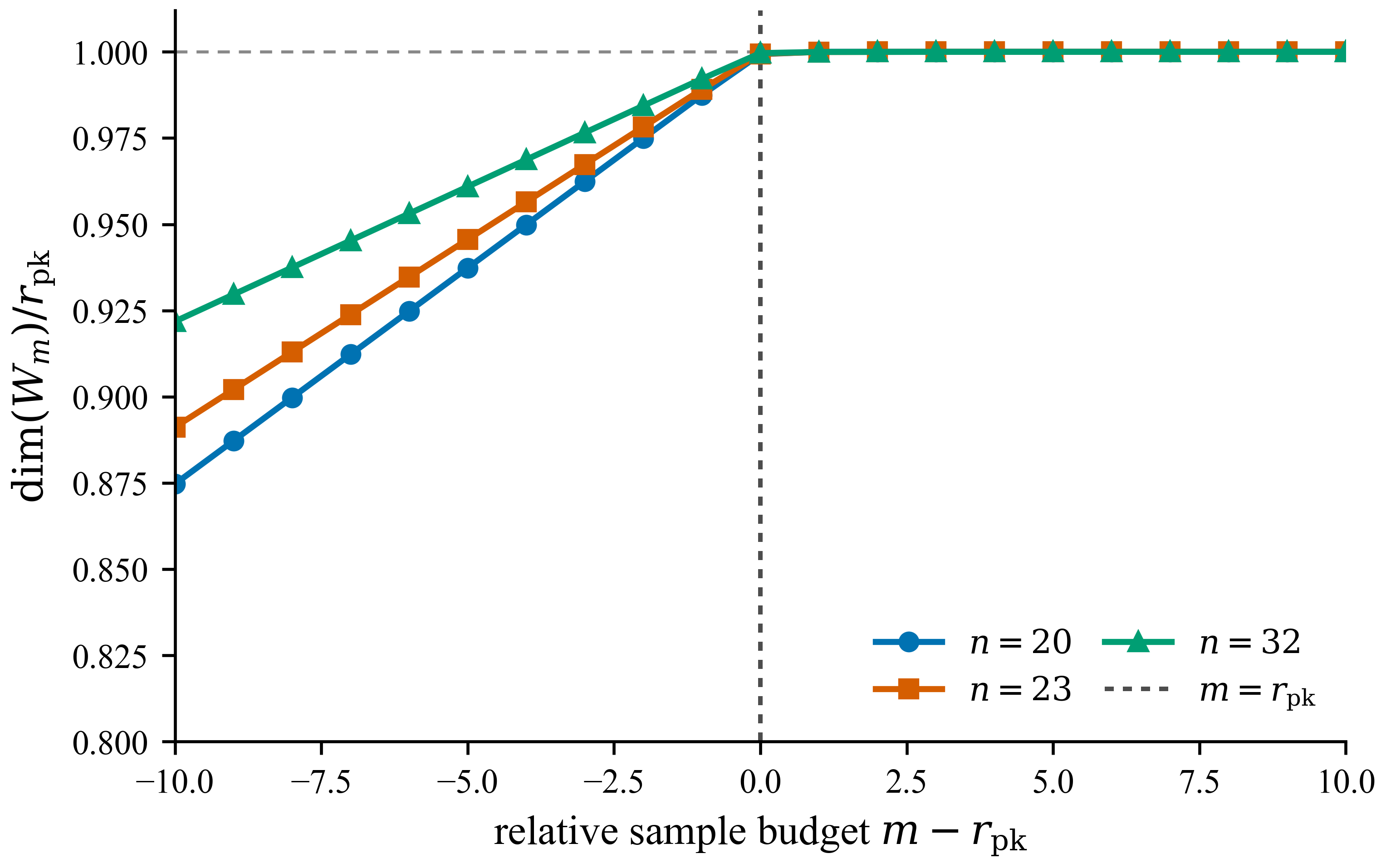}
    \caption{Fraction of the exact sampled-orbit rank learned from native
    samples.  The native recovery transition occurs only as this fraction
    approaches one.}
    \Description{Three rank-fraction curves rise gradually toward one as the
    sample budget approaches the exact rank and remain at one above it.}
    \label{fig:native-rank-fraction}
\end{figure}

This behavior is consistent with a comparatively spread-out native sampler:
missing even one of its final directions leaves substantial future probability
mass uncovered.  Full-span acquisition therefore explains the location of the
native threshold, but it does not yet establish that full rank is necessary for
an arbitrary encryption-induced distribution.

\subsection{Is full-span recovery fundamentally necessary?}
\label{subsec:exp-nonuniform}

The final step isolates that distributional question with two controlled
synthetic stress tests over valid protocol randomness.  These samplers are
\emph{not} proposed as models of the native 2024 distribution.  The moderate
sampler places 80\% of its mass uniformly on a fixed pool of 16 valid ephemeral
randomness pairs and retains the native sampler with probability 20\%.  The
strong sampler places 98\% of its mass on a pool of four valid pairs and retains
the native sampler with probability 2\%.  Because both native components are
strictly positive, the two synthetic distributions retain the full native
algebraic support and hence the same exact $r_{\mathsf{pk}}$; only the allocation
of probability mass changes.

\begin{figure*}[t]
    \centering
    \includegraphics[width=0.98\textwidth]{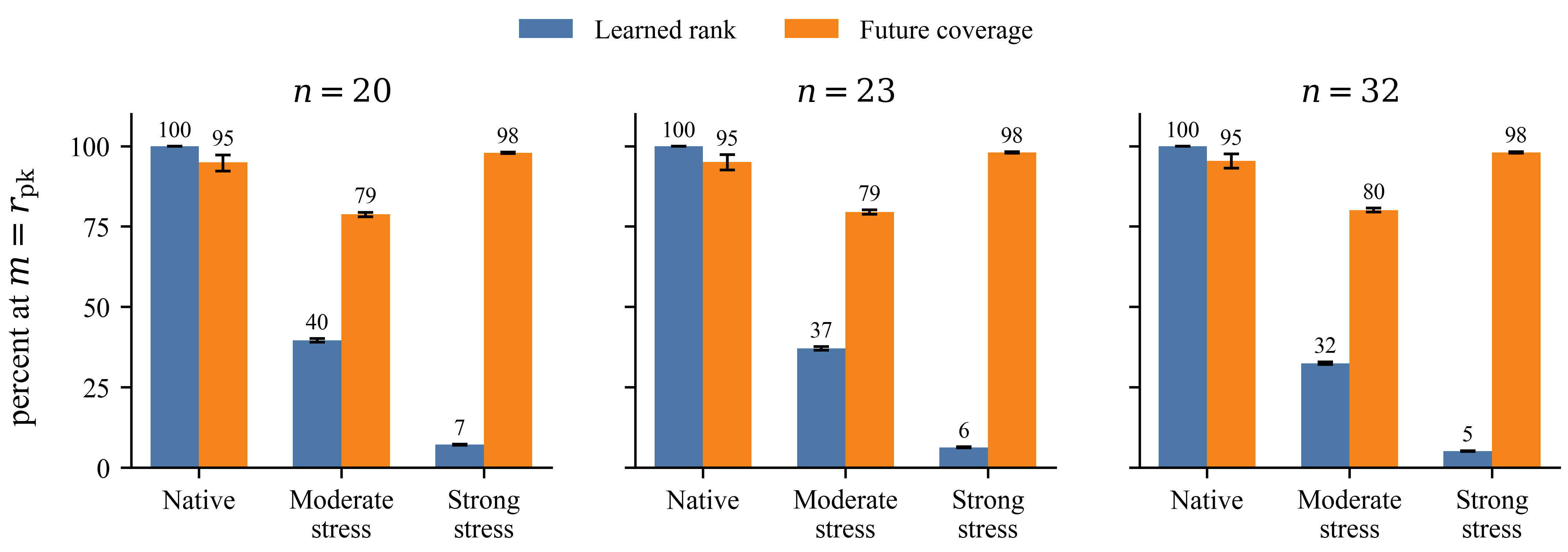}
    \caption{Learned rank fraction versus future coverage at
    $m=r_{\mathsf{pk}}$.  Native values come from Experiment~3; the moderate
    and strong hot-pool samplers are controlled synthetic stress tests.  Error
    bars give 95\% hierarchical-bootstrap intervals.}
    \Description{Grouped bars for n equal to 20, 23, and 32 compare learned
    rank and future coverage. Under strong stress, rank is only five to seven
    percent while coverage is about 98 percent.}
    \label{fig:rank-coverage-tradeoff}
\end{figure*}

Figure~\ref{fig:rank-coverage-tradeoff} displays the resulting separation.  At
$m=r_{\mathsf{pk}}$, the moderate sampler learns only 39.6\%, 37.1\%, and
32.5\% of the exact rank for $n=20,23,32$, but already covers 78.8--80.1\% of
future samples.  The strong sampler is more striking: it learns only 7.1\%,
6.3\%, and 5.1\% of the exact rank while covering 97.88--98.03\% of future
samples.  Its rank deficit grows with $n$, yet its attack-relevant coverage
remains essentially the 98\% probability mass concentrated on the hot pool.

The native experiment and the stress tests therefore answer different
questions.  The former explains why this implementation exhibits a threshold
near $m=r_{\mathsf{pk}}$; the latter demonstrates that this is a property of
the native distribution, not a general prerequisite of coefficient-transfer
attacks.  The empirical conclusion matches the quantity controlled by our
theorems:
\begin{center}
\fbox{\parbox{0.90\columnwidth}{\centering
Full-span recovery explains the native experiment, but it is not generally
required for successful future-ciphertext attacks.}}
\end{center}

\FloatBarrier
\section{Discussion}
\label{sec:discussion}

\subsection{Implications for Public-Key Encryption Design}
\label{sec:discussion-implications}

Our results suggest a simple design lesson for algebraic public-key encryption:
security should not be evaluated only in terms of whether the underlying secret
action, hidden generators, or full algebraic orbit can be recovered.  If honest
public operations expose sufficiently many examples of a fixed exact linear
transport, then an adversary may recover the effect of the secret action on
future ciphertexts without reconstructing the action itself.

The attack studied in this paper relies on three structural ingredients:
a publicly samplable source of paired observations, a fixed low-dimensional exact
linear transport across those observations, and a ciphertext payload that becomes
recoverable once the transported value is known.  Preventing the present attack
therefore requires breaking at least one of these ingredients.

\begin{table}[t]
    \caption{Design implications of the sampler-only attack.}
    \label{tab:design-implications}
    \centering
    \small
    \begin{tabular}{@{}>{\raggedright\arraybackslash}p{0.34\columnwidth}
        >{\raggedright\arraybackslash}p{0.60\columnwidth}@{}}
        \toprule
        Vulnerable ingredient & Possible design response \\
        \midrule
        Public paired sampling
        &
        Avoid exposing publicly evaluable pairs that reveal the same
        secret-dependent transport across fresh randomness. \\

        Fixed exact linear transport
        &
        Introduce genuinely nonlinear, randomness-dependent, or otherwise
        non-transferable secret transformations. \\

        Low effective dimension
        &
        Ensure that the publicly sampled states do not lie in a
        polynomial-dimensional linear representation accessible to the
        adversary. \\

        Recoverable payload from the transported value
        &
        Avoid constructions in which knowledge of the transported value
        alone enables public recovery of the plaintext or session key. \\
        \bottomrule
    \end{tabular}
\end{table}

In particular, merely hiding an explicit generating set or making full-span
reconstruction inconvenient is not sufficient.  Our analysis shows that the
relevant cryptanalytic quantity is the probability mass of future ciphertexts
covered by the sampled span.  The non-uniform stress tests further demonstrate
that this coverage may become large even when the adversary has recovered only a
small fraction of the full algebraic rank.  Hence increasing algebraic support
alone does not necessarily address the vulnerability if most encryption
probability remains concentrated on a much smaller learnable region.

\subsection{Limitations and Scope}
\label{sec:discussion-limitations}

Our results concern exact linear key transport.  The generic attack assumes that
publicly generated pairs satisfy a fixed relation of the form
$Y_\rho=A X_\rho$ over a finite-dimensional vector space.  The analysis therefore
does not directly cover schemes in which the corresponding relation is nonlinear,
depends essentially on fresh hidden randomness, or is only approximately linear.

Likewise, the optimality result in Section~\ref{sec:main} characterizes
the distribution-free sample complexity of the sampled-span coverage problem.
It does not establish a lower bound against every possible cryptanalytic
algorithm.  An adversary exploiting additional algebraic information, protocol
structure, side information, or a different representation may in principle
require fewer samples.

The concrete experiments are intended to validate the attack mechanism and its
distributional interpretation for the twisted--skew construction studied in this
paper.  The synthetic non-uniform samplers are controlled stress tests over valid
protocol randomness rather than models of the native randomness distribution.
Their purpose is to isolate the distinction between algebraic rank and
future-ciphertext probability mass, not to claim that the original 2024 scheme
uses such skewed distributions.

Finally, our concrete cryptanalysis does not imply that sampler-only access is
the only possible way to attack the twisted--skew construction.  Classical
basis-based linear-decomposition or other scheme-specific algebraic techniques
may also be adaptable.  Our contribution is instead to show that the public
sampling capability alone already suffices, under arbitrary induced
distributions, and to quantify this fact through explicit finite-sample
guarantees and formal security consequences.

Extending the analysis to noisy or approximate transport is a natural next step.
In that setting, coefficient transfer introduces accumulated error, and the
success of the attack may depend on the conditioning of the sampled
representation, the norm of the decomposition coefficients, and the error
tolerance of the payload-recovery mechanism.
\section{Related Work}
\label{sec:related-work}

\paragraph{Algebraic actions in public-key cryptography.}
Algebraic group and semigroup actions have long provided a framework for
constructing public-key protocols beyond classical number-theoretic settings.
The Ko--Lee protocol used commuting subgroup actions in braid
groups~\cite{KoLeeCheonHanKangPark2000BraidPKE}, while Anshel, Anshel, and
Goldfeld proposed a related noncommutative approach based on conjugation and
commutators~\cite{AnshelAnshelGoldfeld1999}.  Maze, Monico, and Rosenthal later
formulated Diffie--Hellman-type key exchange directly through semigroup actions
on finite sets~\cite{MazeMonicoRosenthal2007}.  These constructions motivate a
general security question that is central to our work: even when recovering the
secret algebraic action is difficult, can its effect on a fresh public state be
reconstructed from publicly available linear information?

\paragraph{Linear-algebraic and linear-decomposition cryptanalysis.}
A substantial line of work shows that the answer can be positive.
Cheon and Jun gave an early polynomial-time linear-algebraic attack on the braid
Diffie--Hellman conjugacy problem~\cite{CheonJun2003LDA}.  Shpilrain and Ushakov
emphasized more generally that solving the nominal conjugacy search problem is
neither necessary nor sufficient for attacking certain group-based
protocols~\cite{ShpilrainUshakov2006CSPUnnecessary}.  Tsaban developed the
linear-centralizer method and obtained polynomial-time cryptanalysis of several
noncommutative key-exchange problems~\cite{Tsaban2015Polynomial}.  The
linear-decomposition attack of Myasnikov and Roman'kov made the coefficient
transfer principle explicit: once a target public element is represented in a
known linear span, the same coefficients can be used to reconstruct the
corresponding secret-dependent value without recovering the secret
action~\cite{MyasnikovRomankov2014LDA}.  Roman'kov subsequently applied linear
decomposition to semidirect-product protocols~\cite{Romankov2015Semidirect} and
to generic two-sided multiplication constructions~\cite{Romankov2017TwoSided}.
Ben-Zvi, Kalka, and Tsaban broadened this perspective through cryptanalysis via
algebraic spans~\cite{BenZviKalkaTsaban2018}.

Random sampling within this paradigm is also not new.  Roman'kov showed in the
cryptanalysis of the Modified Matrix Modular Cryptosystem that random public
elements can probabilistically replace an explicitly constructed
basis~\cite{Romankov2018RandomSpanning}.  Our contribution is therefore not
random spanning or coefficient transfer themselves.  We instead consider the
distribution actually induced by public protocol sampling, make no uniformity
or anti-concentration assumption, and ask whether a finite sampled span covers an
\emph{independently generated future target}.  This changes the objective from
exact recovery of the full relevant span to distributional coverage of future
cryptographic states.

Related linear-algebraic attacks have continued to appear for modern
semidirect-product and matrix-action proposals.  For example, MAKE was rapidly
shown to admit efficient cryptanalysis~\cite{MonicoMahalanobis2020MAKE}, and
Battarbee, Kahrobaei, and Shahandashti established conditions under which
semidirect-product key exchange over matrices with noncommutative coefficient
rings remains vulnerable to linear-algebraic
methods~\cite{BattarbeeKahrobaeiShahandashti2022}.  These works further
illustrate the recurring danger of hidden algebraic operations whose effects
remain accessible through low-dimensional linear representations.

\paragraph{Twisted and skew group-ring constructions.}
Our concrete application belongs to a recent family of public-key constructions
based on twisted or skew dihedral group rings.  de la Cruz and
Villanueva-Polanco proposed key exchange, probabilistic encryption, and a KEM
over twisted dihedral group algebras~\cite{DeLaCruzVillanueva2024TwistedDihedral};
a related construction using skew dihedral group rings was subsequently proposed
by de la Cruz, Mart\'inez-Moro, and
Villanueva-Polanco~\cite{DeLaCruzMartinezVillanueva2022SkewDihedral}.
Tinani gave a polynomial-time cryptanalysis of the earlier twisted-dihedral
construction through an algebraic reduction to equations over the underlying
finite field~\cite{Tinani2022TwistedDihedral}.  More recently, Otero Sanchez
studied generic attacks on protocols based on two-sided multiplication and
analyzed several related group-ring constructions~\cite{OteroSanchez2025TwoSide}.

The 2024 construction combines twisting with skewing and may be nonassociative
for its proposed parameters~\cite{CruzMartinezMunozVillanueva2024}.
Consequently, some earlier generic theorems for two-sided multiplication do not
apply verbatim.  Classical basis-based decomposition methods may nevertheless
be adaptable, so we do not claim that sampler-only access is necessary.  To the
best of our knowledge, prior work does not give an explicit attack on the CTSP
or on Algorithms~6--8 of this exact 2024 PKE.  Our attack uses only honest
public sampling and evaluation and provides a distribution-free finite-sample
guarantee with CTSP, plaintext-recovery, and IND--CPA consequences.

\paragraph{Statistical learning of spans and subspaces.}
The statistical component of our analysis is related to learning linear
structure from samples.  Rudi, Ca\~nas, and Rosasco studied the sample complexity
of subspace learning, including PCA and spectral support estimation, under
spectral assumptions on the data distribution~\cite{RudiCanasRosasco2014}.
Stable sample compression provides a different route to distribution-free
generalization: Hanneke and Kontorovich established sharp data-dependent
generalization guarantees for stable compression
schemes~\cite{HannekeKontorovich2021}, which we use to certify future coverage of
the sampled span.  Recent work has moved even closer to our statistical object.
Noivirt, Sorrell, and Tsfadia use a procedure that outputs a subspace covering
most of an arbitrary distribution as a building block for replicable
learning~\cite{noivirt2026replicable}, while Ye, Amin, and \"Ozda\u{g}lar obtain
distribution-free guarantees for cumulative acquisition of
decision-relevant directions~\cite{YeAminOzdaglar2026}.

These learning results do not address cryptanalytic paired observations of the
form $(X,AX)$, coefficient transfer through an unknown secret map, or formal PKE
security.  Conversely, classical linear-decomposition attacks do not provide a
distribution-free finite-sample characterization of how a sampled span covers
future encryption randomness.  Our work lies at this intersection, converting
distributional span coverage into exact key-transport recovery and formal
cryptographic insecurity.
\section{Conclusion}
\label{sec:conclusion}

This work develops a distributional view of linear-decomposition cryptanalysis.
Rather than assuming that an adversary is given explicit algebraic generators or
can reconstruct the entire relevant action space, we ask what can be learned
using only the public sampling-and-evaluation interface available to an honest
participant.  The resulting viewpoint shifts the cryptanalytic objective from
recovering the full algebraic span to covering the probability mass of future
ciphertexts.

Our analysis shows that this distinction is fundamental.  A finite collection
of public samples can already provide strong guarantees for independently
generated future targets under an arbitrary encryption-induced distribution.
We characterize the optimal sample complexity of this sampled-span problem and
combine the resulting coverage guarantees with the classical
coefficient-transfer principle.  This yields a generic cryptographic consequence:
publicly samplable exact linear key transport of polynomial effective dimension
cannot provide IND--CPA security when the transported value determines the
decryption-critical payload.

We demonstrate this phenomenon on the 2024 twisted--skew group-ring PKE.  Despite
the generally non-associative multiplication of the underlying construction, its
correctness relation induces a fixed linear transport when the multiplication
order is kept unchanged.  Publicly generated protocol samples therefore suffice
to attack the Computational Twisted--Skew Problem underlying the construction,
leading directly to plaintext recovery and an IND--CPA break.  Our experiments
confirm the exact transport identity and the resulting end-to-end attack, while
also illustrating the gap between algebraic rank recovery and probability-mass
coverage under non-uniform sampling.

The broader lesson is that hiding generators or making recovery of the complete
orbit space difficult does not by itself prevent linear-decomposition
cryptanalysis.  For public-key constructions exposing exact linear key transport,
the relevant security question is not only whether the secret action can be
recovered, but whether its effect on future ciphertexts can be interpolated from
publicly generated examples.  Avoiding this attack therefore requires breaking
one of the structural ingredients enabling such interpolation, rather than merely
changing the sampling distribution or withholding an explicit algebraic basis.
Understanding whether analogous attacks remain possible under noisy, approximate,
or randomness-dependent transport is a natural direction for future work.

\newpage
\bibliographystyle{ACM-Reference-Format}
\bibliography{main}

\appendix
\section{Open Science}
\label{app:open-science}

An anonymized research artifact supporting the experimental claims in
Section~\ref{sec:experiments} accompanies the submission.  The archival
repository is available at
\url{https://anonymous.4open.science/r/Distribution-free-LDA-5DF3/}.  The artifact
contains our self-contained Python implementation of the twisted--skew
arithmetic, the concrete PKE, and the sampler-only coefficient-transfer attack;
the scripts for all four experimental stages; the raw and aggregated CSV and
JSON outputs; and the visualization script and generated figures.

The implementation performs field, ring, rank, span-membership, transport, and
plaintext-recovery computations exactly over the stated finite fields.  The
experiment scripts use the Python standard library, while figure generation
additionally uses Matplotlib.  No external dataset, network service,
proprietary software, or numerical linear-algebra tolerance is required.

For reproducibility, the artifact records the master seed and deterministically
derived setup and repetition seeds, together with the parameter sets, sample
budgets, numbers of public setups and learned spans, challenges per span,
bootstrap confidence level, and number of bootstrap repetitions.  The released
tables include per-setup and per-span results in addition to the aggregates
reported in the paper, permitting the hierarchical confidence intervals and
all four figures to be regenerated from the saved outputs.

The moderate and strong non-uniform samplers in the artifact are explicitly
marked as synthetic stress tests over valid protocol randomness; they are not
presented as the native sampler of the 2024 construction.  The artifact is
intended to reproduce the algebraic validation and empirical results, not to
replace the self-contained mathematical proofs in this paper.  No human-subject
or personal data are used.

\section{Ethical Considerations}
\label{app:ethical-considerations}

This work presents dual-use cryptanalytic techniques.  Our evaluation is
restricted to a published research construction and locally generated keys,
randomness, and ciphertexts; it does not target deployed systems, access
third-party services, or use personal data.  We provide the attack details and
research artifact to enable independent validation and to inform the design of
safer public-key encryption schemes.  The implementation is intended for
research and defensive evaluation.  Applying these techniques beyond the
construction studied here should require appropriate authorization and follow
applicable responsible-disclosure practices.

\section{Proofs for Preliminaries and Attack Setup}
\label{app:proofs-pre}

\subsection{Proof of the Coefficient-Transfer Lemma}
\label{app:proof-coefficient-transfer}

\begin{proof}
By linearity of $A$,
\[
    Y^\star
    =
    AX^\star
    =
    A\left(
        \sum_{i=1}^m \alpha_i X_i
    \right)
    =
    \sum_{i=1}^m \alpha_i AX_i.
\]
Since $Y_i=AX_i$ for every $i$,
\[
    Y^\star
    =
    \sum_{i=1}^m \alpha_i Y_i,
\]
which proves \eqref{eq:coefficient-transfer}. In the commuting-action specialization,
\[
    X_i=B_{\rho_i}v,
    \qquad
    Y_i=B_{\rho_i}w,
    \qquad
    w=Av,
\]
and $AB_{\rho_i}=B_{\rho_i}A$ implies $Y_i
    =
    B_{\rho_i}Av
    =
    AB_{\rho_i}v
    =
    AX_i.$ This is precisely the coefficient-transfer mechanism used in classical linear
decomposition attacks \cite{MyasnikovRomankov2014LDA}; the present formulation
isolates the linear relation needed later without requiring an explicit
generating set for the action family.
\end{proof}

\subsection{Proof of the One-Shot Sampled-Span Proposition}
\label{app:proof-one-shot}

\begin{proof}
For convenience, write $ X_{m+1}:=X^\star$ and define
\[
    S_j
    :=
    \operatorname{span}(X_1,\ldots,X_j),
    \qquad
    S_0:=\{0\}.
\]
Let
\[
    I_j
    :=
    \mathbf{1}\{X_j\notin S_{j-1}\},
    \qquad
    q_j:=\mathbb{E}[I_j].
\]
Every event $I_j=1$ increases the dimension of the running span by exactly
one.  Hence
\[
    \sum_{j=1}^{m+1} I_j
    =
    \dim S_{m+1}
    \leq r_\mu,
\]
and therefore
\begin{equation}
    \sum_{j=1}^{m+1} q_j
    \leq r_\mu.
    \label{eq:sum-rank-increments}
\end{equation}

We next observe that $(q_j)$ is non-increasing.  Conditional on $S_{j-1}$, independence of $X_j$ gives
\[
    \Pr[X_j\notin S_{j-1}\mid S_{j-1}]
    =
    1-\mu(S_{j-1}),
\]
therefore
\[
    q_j
    =
    \mathbb{E}\bigl[1-\mu(S_{j-1})\bigr].
\]
Since
\[
    S_{j-1}\subseteq S_j,
\]
we have
\[
    1-\mu(S_j)
    \leq
    1-\mu(S_{j-1})
\]
pointwise, and thus
\[
    q_{j+1}\leq q_j.
\]
It follows from \eqref{eq:sum-rank-increments} that
\[
    q_{m+1}
    \leq
    \frac{1}{m+1}
    \sum_{j=1}^{m+1}q_j
    \leq
    \frac{r_\mu}{m+1}.
\]
Finally,
\[
    q_{m+1}
    =
    \Pr[
        X^\star
        \notin
        \operatorname{span}(X_1,\ldots,X_m)
    ],
\]
which proves \eqref{eq:one-shot-bound}.

Whenever $X^\star\in W_m$, Gaussian elimination yields coefficients $\alpha_i$ such that
\[
    X^\star=\sum_i\alpha_iX_i.
\]
Lemma~\ref{lem:coefficient-transfer} then recovers
\[
    Y^\star=\sum_i\alpha_iY_i.
\]
\end{proof}

\section{Proofs for the Main Results}
\label{app:proofs-main}

\subsection{Proof of the Distribution-Free Sampler-Only Certificate}
\label{app:proof-distribution-free}

\begin{proof}
We reduce the sampled-span procedure to a realizable stable sample compression
scheme and then apply Corollary~11 of Hanneke and Kontorovich
\cite{HannekeKontorovich2021}.

Take the instance space to be $V$ and assign every training vector the same
label $0$.  For any finite set $K\subseteq V$, define the reconstructed classifier
\[
    h_K(x)
    :=
    \mathbf{1}
    \left\{
        x\notin\operatorname{span}(K)
    \right\}.
\]
The compression map $\kappa$ retains a basis of the sample span as described
above, and the reconstruction map returns $h_{\kappa(S_m)}$.

First, the scheme is sample-consistent.  Indeed,
\[
    X_i\in
    \operatorname{span}(\kappa(S_m))
    =
    W_m
\]
for every $i$, so
\[
    h_{\kappa(S_m)}(X_i)=0.
\]
Thus the empirical classification error is zero.

Second, the compression scheme is stable.  Suppose
\[
    \kappa(S_m)
    \subseteq
    S'
    \subseteq
    S_m.
\]
Since $\kappa(S_m)$ already spans $S_m$,
\[
    \operatorname{span}(S')
    =
    \operatorname{span}(S_m).
\]
The compression rule applied to $S'$ returns some basis of
$\operatorname{span}(S')$, and hence
\[
    \operatorname{span}(\kappa(S'))
    =
    \operatorname{span}(\kappa(S_m)).
\]
Therefore the reconstructed classifiers agree:
\[
    h_{\kappa(S')}
    =
    h_{\kappa(S_m)}.
\]
This is precisely the stability property required by stable sample
compression.

Finally, by \eqref{eq:compression-rank}, $|\kappa(S_m)|=\widehat r_m$. Corollary~11 of Hanneke and Kontorovich
\cite{HannekeKontorovich2021} states that for any realizable stable
compression scheme, with probability at least $1-\delta$,
\[
    R
    \leq
    \frac{4}{m}
    \left(
        6|\kappa(S_m)|
        +
        \ln\frac{e}{\delta}
    \right).
\]
For our classifier,
\[
    R
    =
    \Pr_{X\sim\mu}
    [
        X\notin W_m
    ]
    =
    R_\mu(W_m).
\]
Substituting $|\kappa(S_m)|=\widehat r_m$ yields \eqref{eq:data-dependent-risk}.  The inequality
$\widehat r_m\leq r_\mu$ immediately gives \eqref{eq:population-rank-risk}.

For any fresh $X^\star\in W_m$, Lemma~\ref{lem:coefficient-transfer} recovers the corresponding $Y^\star$, completing the cryptanalytic interpretation.
\end{proof}

\subsection{Proof of the Optimal Sample-Complexity Theorem}
\label{app:proof-optimal-sample}

\begin{proof}
\noindent\emph{Upper bound.}\par
By Theorem~\ref{thm:distribution-free}, with probability at least $1-\delta$,
\[
    R_\mu(W_m)
    \leq
    \frac{4}{m}
    \left(
        6r+\ln\frac{e}{\delta}
    \right)
\]
for every distribution whose sampled orbit dimension is at most $r$. Thus \eqref{eq:upper-sample-complexity} is sufficient to ensure
\[
    R_\mu(W_m)\leq\varepsilon.
\]

\noindent\emph{Dimension-dependent lower bound.}\par
The argument follows the classical rare-types construction used in
sample-compression lower bounds; a recent structurally analogous use appears
in~\cite{YeAminOzdaglar2026}.

Work in $V=\mathbb{F}^r$ with linearly independent vectors $ e_1,\ldots,e_r$. Let $k=r-1$ and define
\[
    \Pr[X=e_1]
    =
    1-2\varepsilon,
\]
and
\[
    \Pr[X=e_i]
    =
    \frac{2\varepsilon}{k},
    \qquad
    i=2,\ldots,r.
\]
The vector $e_1$ is the common type, while $e_2,\ldots,e_r$ are $k$ rare, linearly independent types.

Let $N_{\mathrm{rare}}$ denote the number of rare draws among the $m$ training samples.  Then
\[
    \mathbb{E}[N_{\mathrm{rare}}]
    =
    2\varepsilon m.
\]
If $m
    \leq
    \frac{k}{8\varepsilon}$, then
\[
    \mathbb{E}[N_{\mathrm{rare}}]
    \leq
    \frac{k}{4}.
\]
By Markov's inequality,
\[
    \Pr\left[
        N_{\mathrm{rare}}
        \geq
        \frac{k}{2}
    \right]
    \leq
    \frac{1}{2}.
\]
Hence with probability at least $1/2$,
\[
    N_{\mathrm{rare}}<\frac{k}{2}.
\]
On this event, strictly more than half of the rare basis vectors have never
appeared.  Because these vectors are linearly independent, none of the unseen
rare vectors lies in the span of the observed samples.  Therefore
\[
    R_\mu(W_m)
    >
    \frac{k}{2}
    \frac{2\varepsilon}{k}
    =
    \varepsilon.
\]
Thus
\[
    \Pr[
        R_\mu(W_m)>\varepsilon
    ]
    \geq
    \frac12.
\]
Since $\delta<1/2$, no distribution-free guarantee at confidence $1-\delta$ is possible when
$m\leq (r-1)/(8\varepsilon)$, proving
\eqref{eq:lower-rank}.

\noindent\emph{Confidence-dependent lower bound.}\par
Now consider a two-dimensional distribution
\[
    \Pr[X=e_1]=1-2\varepsilon,
    \qquad
    \Pr[X=e_2]=2\varepsilon.
\]
If the rare vector $e_2$ does not appear among the $m$ samples, then
\[
    W_m=\operatorname{span}(e_1)
\]
and therefore
\[
    R_\mu(W_m)=2\varepsilon>\varepsilon.
\]
The probability of this event is
\[
    (1-2\varepsilon)^m.
\]
For $\varepsilon\leq1/4$,
\[
    \ln(1-2\varepsilon)
    \geq
    -\frac{2\varepsilon}{1-2\varepsilon}
    \geq
    -4\varepsilon,
\]
so
\[
    (1-2\varepsilon)^m
    \geq
    e^{-4\varepsilon m}.
\]
Consequently, if
\[
    m
    <
    \frac{\ln(1/\delta)}{4\varepsilon},
\]
then
\[
    \Pr[
        R_\mu(W_m)>\varepsilon
    ]
    >
    \delta.
\]
This proves \eqref{eq:lower-confidence}.

Combining
\eqref{eq:upper-sample-complexity}, \eqref{eq:lower-rank}, and
\eqref{eq:lower-confidence}, and using
\[
    \max\{a,b\}
    \geq
    \frac{a+b}{2},
\]
gives
\[
    m^\star_{\mathrm{span}}(r,\varepsilon,\delta)
    =
    \Theta\left(
        \frac{r+\ln(1/\delta)}{\varepsilon}
    \right).
\]
\end{proof}

\subsection{Proof of the IND-CPA No-Go Theorem for Recoverable Payloads}
\label{app:proof-indcpa-nogo}

\begin{proof}
Fix a public key $pk$.  Before receiving the IND-CPA challenge, the adversary
samples
\[
    m:=2R(\lambda)-1
\]
independent values
\[
    \rho_1,\ldots,\rho_m
    \overset{\mathrm{i.i.d.}}{\leftarrow}
    \mathcal{D}_{pk}.
\]
Because the paired sampler is public, the adversary computes
\[
    (X_i,Y_i)=\mathsf{Pair}(pk;\rho_i).
\]
By \eqref{eq:pke-transport},
\[
    Y_i=A_{sk}X_i.
\]

The adversary submits any two distinct equal-length messages $m_0,m_1$.
Let the challenge ciphertext be
\[
    C^\star
    =
    \left(
        X^\star,C_{\mathrm{payload}}^\star
    \right)
    =
    \left(
        X_{\rho^\star},\,
        C_{\mathrm{payload}}^\star
    \right),
\]
where
\[
    Y^\star
    =
    Y_{\rho^\star}
    =
    A_{sk}X^\star.
\]

The adversary performs Gaussian elimination to test whether
\[
    X^\star
    \in
    \operatorname{span}(X_1,\ldots,X_m).
\]
If so, it computes coefficients $\alpha_1,\ldots,\alpha_m$ satisfying
\[
    X^\star
    =
    \sum_{i=1}^m\alpha_iX_i.
\]
Lemma~\ref{lem:coefficient-transfer} gives the exact transported value
\[
    Y^\star
    =
    \sum_{i=1}^m\alpha_iY_i.
\]
The adversary then runs the public recovery algorithm to obtain
\[
    \widetilde m
    =
    \mathsf{RecoverPayload}
    \left(
        pk,C_{\mathrm{payload}}^\star,Y^\star
    \right).
\]
If $\widetilde m=m_0$, it outputs $0$; if $\widetilde m=m_1$, it outputs
$1$; otherwise it outputs an independent uniform bit.  If $X^\star$ does not
lie in the sampled span, it also outputs an independent uniform bit.

By Proposition~\ref{prop:one-shot},
\[
\begin{split}
    p_{\mathrm{span}}
    &:=
    \Pr\left[
        X^\star
        \in
        \operatorname{span}(X_1,\ldots,X_m)
    \right]
\\
    &\geq
    1-\frac{r(pk)}{m+1}
\\
    &\geq
    1-\frac{R(\lambda)}{2R(\lambda)}
    =
    \frac12.
\end{split}
\]
Let $\mathcal{S}$ denote the sampled-span event above, and let
$\mathcal{E}$ denote the event that
\eqref{eq:recoverable-payload} correctly recovers the challenge message.
The recoverable-payload condition gives
\[
    \Pr[\neg\mathcal{E}]
    \leq
    \nu(\lambda).
\]
On $\mathcal{S}\cap\mathcal{E}$ the adversary determines $b$ with certainty,
while on $\neg\mathcal{S}$ it succeeds with probability $1/2$.  Therefore
\[
\begin{split}
    \Pr[b'=b]
    &\geq
    \Pr[\mathcal{S}\cap\mathcal{E}]
    +
    \frac12\Pr[\neg\mathcal{S}]
\\
    &\geq
    p_{\mathrm{span}}-\nu(\lambda)
    +
    \frac12(1-p_{\mathrm{span}})
\\
    &\geq
    \frac34-\nu(\lambda).
\end{split}
\]
Under the convention
\[
    \operatorname{Adv}^{\mathrm{IND\mbox{-}CPA}}
    :=
    \left|
        \Pr[b'=b]-\frac12
    \right|,
\]
the adversary therefore has advantage at least
\[
    \frac14-\nu(\lambda).
\]

Finally, the adversary uses only $O(R(\lambda))$ public paired samples and
polynomial-time linear algebra.  Since $R(\lambda)$ is polynomially bounded,
the adversary is PPT.  Since $\nu$ is negligible, the advantage remains
non-negligible, and the scheme cannot satisfy IND-CPA security.  If payload
recovery is perfectly correct, then $\nu=0$ and the advantage is at least
$1/4$.
\end{proof}

\section{Proofs for the Twisted--Skew Application}
\label{app:proofs-application}

\subsection{Proof of the Twisted--Skew Linear-Transport Lemma}
\label{app:proof-twisted-skew-transport}

\begin{proof}
Write elements of $\mathcal{R}$ as formal sums over $G$.  By the defining
product of the twisted--skew group ring
\cite{CruzMartinezMunozVillanueva2024}, homogeneous terms satisfy
\begin{equation}
    (u g)(v h)
    =
    u\,\theta_\sigma(g)(v)\,
    \alpha_\lambda(g,h)\,gh,
    \label{eq:ts-product-rule}
\end{equation}
and the product of arbitrary sums is obtained by distributive extension.
Each $\theta_\sigma(g)$ is either the identity on $\mathbb{F}_{q^2}$ or its
$q$-Frobenius automorphism.  Hence it fixes every scalar
$c\in\mathbb{F}_q$, so
\[
    \theta_\sigma(g)(cv)=c\theta_\sigma(g)(v).
\]
It follows directly from \eqref{eq:ts-product-rule} that multiplication is
additive in both inputs and $\mathbb{F}_q$-linear in each input separately.
This assertion does not require associativity.

With the parentheses kept fixed, let
\[
    L_{a_1}(x):=a_1x,
    \qquad
    R_{\widehat{\gamma_1}}(z):=z\widehat{\gamma_1}.
\]
Both maps are therefore $\mathbb{F}_q$-linear, and
$\mathcal{A}_{\mathsf{sk}}=R_{\widehat{\gamma_1}}\circ L_{a_1}$.
Explicitly, for $x,y\in\mathcal{R}$ and $c\in\mathbb{F}_q$,
\begin{align*}
    \mathcal{A}_{\mathsf{sk}}(x+y)
    &=\bigl(a_1(x+y)\bigr)\widehat{\gamma_1} \\
    &=\bigl(a_1x+a_1y\bigr)\widehat{\gamma_1} \\
    &=(a_1x)\widehat{\gamma_1}
      +(a_1y)\widehat{\gamma_1},
\end{align*}
and
\begin{align*}
    \mathcal{A}_{\mathsf{sk}}(cx)
    &=\bigl(a_1(cx)\bigr)\widehat{\gamma_1} \\
    &=\bigl(c(a_1x)\bigr)\widehat{\gamma_1} \\
    &=c\bigl((a_1x)\widehat{\gamma_1}\bigr).
\end{align*}
Thus $\mathcal{A}_{\mathsf{sk}}$ is $\mathbb{F}_q$-linear.

Finally, substituting $c_1=X_\rho$ into the verified correctness identity
\eqref{eq:ts-correctness} gives
\[
    \mathcal{A}_{\mathsf{sk}}(X_\rho)
    =
    (a_1X_\rho)\widehat{\gamma_1}
    =
    (a_2\mathsf{pk})\widehat{\gamma_2}
    =
    Y_\rho.
\]
No product has been reassociated in this argument.
\end{proof}

\subsection{Proof of the CTSP Corollary}
\label{app:proof-twisted-skew-ctsp}

\begin{proof}
Fix the first CTSP public key
\[
    pk_1=\psi((a_1,\gamma_1),h)
\]
and its associated secret linear map
$\mathcal{A}_{pk_1}(x)=(a_1x)\widehat{\gamma_1}$.  Algorithm~2 of the
original construction samples
$(a_2,\gamma_2)\leftarrow\mathsf{SK}$ independently and sets
\[
    pk_2=\psi((a_2,\gamma_2),h),
    \qquad
    k=\psi((a_2,\widehat{\gamma_2}),pk_1).
\]
Applying the correctness identity to these two key-generating pairs gives
\[
    k
    =
    (a_2pk_1)\widehat{\gamma_2}
    =
    (a_1pk_2)\widehat{\gamma_1}
    =
    \mathcal{A}_{pk_1}(pk_2).
\]

For each $i\in[m]$, the adversary independently samples
$(b_i,\eta_i)\leftarrow\mathsf{SK}$ and computes
\[
    X_i=(b_ih)\eta_i,
    \qquad
    Y_i=(b_ipk_1)\widehat{\eta_i}.
\]
The same correctness identity, with
$(a_2,\gamma_2)=(b_i,\eta_i)$, gives
\[
    Y_i
    =
    (a_1X_i)\widehat{\gamma_1}
    =
    \mathcal{A}_{pk_1}(X_i).
\]
Conditioned on the fixed $pk_1$, the vectors
$X_1,\ldots,X_m,pk_2$ are i.i.d. from $\mu_{pk_1}$.  Therefore,
Proposition~\ref{prop:one-shot} gives
\[
    \Pr\left[
        pk_2\in
        \operatorname{span}_{\mathbb{F}_q}(X_1,\ldots,X_m)
    \right]
    \geq
    1-\frac{r_{pk_1}}{m+1}.
\]
Whenever this event occurs, Gaussian elimination supplies coefficients
\[
    \alpha_1,\ldots,\alpha_m\in\mathbb{F}_q
\]
such that
\[
    pk_2=\sum_{i=1}^m\alpha_iX_i.
\]
Lemma~\ref{lem:coefficient-transfer} then recovers the CTSP target as
\[
    \widehat{k}
    :=
    \sum_{i=1}^m\alpha_iY_i
    =
    \mathcal{A}_{pk_1}(pk_2)
    =
    k.
\]
This proves \eqref{eq:ts-ctsp-bound}.

Finally, $\mathcal{R}$ has one coefficient in $\mathbb{F}_{q^2}$ for each
element of $G$.  If $|G|=2n$, then
\[
    \dim_{\mathbb{F}_q}\mathcal{R}=2\lvert G\rvert=4n,
\]
so $r_{pk_1}\leq4n$.  Taking $m=8n-1$ makes the lower bound at least $1/2$.
All samples and evaluations are public, and the remaining computation is
polynomial-time Gaussian elimination.
\end{proof}

\subsection{Proof of the Sampler-Only Break Corollary}
\label{app:proof-twisted-skew-break}

\begin{proof}
Set $pk_1=\mathsf{pk}$ and identify the first component of the challenge with
$pk_2=X^\star$.  Its mask is the corresponding CTSP target
$Y^\star=\mathcal{A}_{\mathsf{sk}}(X^\star)$.  For each independent public
encryption of zero, the adversary obtains the CTSP training pair
\[
    (X_i,c_{2,i})
    =
    (X_i,Y_i)
    =
    \bigl(X_i,\mathcal{A}_{\mathsf{sk}}(X_i)\bigr).
\]
Corollary~\ref{cor:twisted-skew-ctsp} computes an estimate
$\widehat{Y}^\star$ satisfying $\widehat{Y}^\star=Y^\star$ with probability at
least $1-r_{\mathsf{pk}}/(m+1)$.  Since
$c_2^\star=m^\star+Y^\star$, the adversary then recovers the plaintext exactly
as
\[
    m^\star=c_2^\star-\widehat{Y}^\star.
\]
This proves \eqref{eq:ts-recovery-bound}.

By definition, $\mathcal{R}$ has one coefficient in $\mathbb{F}_{q^2}$ for
each element of $G$.  If $|G|=2n$, then
\[
    \dim_{\mathbb{F}_{q^2}}\mathcal{R}=2n,
    \qquad
    \dim_{\mathbb{F}_q}\mathcal{R}=4n,
\]
and hence $r_{\mathsf{pk}}\leq4n$.  Taking $m=8n-1$ gives
\[
    1-\frac{r_{\mathsf{pk}}}{m+1}
    \geq
    1-\frac{4n}{8n}
    =
    \frac12.
\]

In the IND--CPA experiment, the adversary submits two distinct messages.  On
the sampled-span event it recovers the challenge plaintext and hence the
challenge bit; otherwise it guesses uniformly.  If the span-event probability
is $p\geq1/2$, then
\[
    \Pr[b'=b]
    \geq
    p+\frac12(1-p)
    =
    \frac12+\frac p2
    \geq
    \frac34.
\]
Its IND--CPA advantage is therefore at least $1/4$.  The attack uses only
$8n-1$ public encryptions and polynomial-time linear algebra in the
$4n$-dimensional $\mathbb{F}_q$ representation.
\end{proof}

\end{document}